\renewcommand{\circ}{\mathcal{C}}
\newcommand{\bbC}{{\mathbb{C}}}
\newcommand{\bbZ}{{\mathbb{Z}}}
\newcommand{\cO}{{\mathcal{O}}}
\newcommand{\cQ}{{\mathcal{Q}}}
\newcommand{\cU}{{\mathcal{U}}}
\newcommand{\tr}{\mathop{\mathrm{tr}}}
\newcommand{\poly}{\mathop{\mathrm{poly}}}
\newcommand{\E}{\mathop{\mbox{$\mathbf{E}$}}}
\def\l{\left}
\def\r{\right}
\renewcommand{\>}{\rangle}
\newcommand{\<}{\langle}
\newcommand{\ket}[1]{{|#1\rangle}}
\newcommand{\bra}[1]{\langle #1|}
\newcommand{\braket}[2]{\langle #1|#2\rangle}
\newcommand{\proj}[1]{\left|#1\right\>\!\left\<#1\right|}
\newcommand{\ot}{\otimes}
\newcommand{\eps}{\varepsilon}
\newcommand{\ra}{\to}
\DeclareMathOperator{\ad}{ad}
\DeclareMathOperator{\supp}{supp}
\DeclareMathOperator{\ANS}{ANS}
\newcommand{\be}{\begin{equation}}
\newcommand{\ee}{\end{equation}}
\def\ba#1\ea{\begin{align}#1\end{align}}
\def\bit{\begin{itemize}}
\def\eit{\end{itemize}}
\newcommand{\eq}[1]{(\ref{eq:#1})}
\newcommand{\thmref}[1]{Theorem~\ref{thm:#1}}
\newcommand{\lemref}[1]{Lemma~\ref{lem:#1}}
\newcommand{\secref}[1]{Section~\ref{sec:#1}}
\newcommand{\defref}[1]{Definition~\ref{def:#1}}
\newcommand{\fig}[1]{Figure~\ref{fig:#1}}
\begin{document}
\title{
  Superpolynomial speedups based on almost any quantum circuit
}

\titlerunning{Superpolynomial speedups based on almost any quantum
  circuit}

\author{Sean Hallgren\inst{1}
\and 
Aram W. Harrow\inst{2}}
\institute{
Department of Computer Science and Engineering,
The Pennsylvania State University \\
University Park, PA \\
\and
Department of Mathematics, University of Bristol,
Bristol, U.K.\\
\email{a.harrow@bris.ac.uk}}
\maketitle

\begin{abstract}
The first separation between quantum polynomial time and classical
bounded-error polynomial time was due to Bernstein and Vazirani in
1993.  They first showed a $O(1)$ vs. $\Omega(n)$ quantum-classical
oracle separation based on the quantum Hadamard transform, and then
showed how to amplify this into a $n^{O(1)}$ time quantum algorithm
and a $n^{\Omega(\log n)}$ classical query lower bound.

We generalize both aspects of this speedup.  We show that a wide
class of unitary circuits (which we call {\em dispersing} circuits)
can be used in place of Hadamards to obtain a $O(1)$ vs. $\Omega(n)$
separation.  The class of dispersing circuits includes all quantum
Fourier transforms (including over nonabelian groups) as well as
nearly all sufficiently long random circuits.  Second, we give a
general method for amplifying quantum-classical separations that
allows us to achieve a $n^{O(1)}$ vs. $n^{\Omega(\log n)}$ separation
from any dispersing circuit.
\end{abstract}

\section{Background}

Understanding the power of quantum computation relative to classical
computation is a fundamental question.  When we look at which problems
can be solved in quantum but not classical polynomial time, we get a
wide range: quantum simulation, factoring, approximating the Jones
polynomial, Pell's equation, estimating Gauss sums, period-finding,
group order-finding and even detecting some mildly non-abelian
symmetries~\cite{SICOMP::Shor1997,Hallgren2007,Watrous01,FriedlIMSS03,vDHI03}.  However, when we look at what algorithmic
tools exist on a quantum computer, the situation is not nearly as
diverse.  Apart from the BQP-complete problems~\cite{AJL06}, the main
tool for solving most of these problems is a quantum Fourier transform
(QFT) over some group.  Moreover, the successes have been for cases
where the group is abelian or close to abelian in some way.  For
sufficiently nonabelian groups, there has been no indication that the
transforms are useful even though they can be computed exponentially
faster than classically.  For example, while an efficient QFT for the
symmetric group has been intensively studied for over a decade because
of its connection to graph isomorphism, it is still unknown whether it
can be used to achieve any kind of speedup over classical
computation~\cite{STOC::Beals1997}.

The first separation between quantum computation and randomized
computation was the Recursive Fourier Sampling problem
(RFS)~\cite{SICOMP::BernsteinV1997}.  This algorithm had two
components, namely using a Fourier transform, and using recursion.
Shortly after this, Simon's algorithm and then Shor's algorithm for
factoring were discovered, and the techniques from these algorithms
have been the focus of most quantum algorithmic research
since~\cite{SICOMP::Simon1997:1474,SICOMP::Shor1997}.  These developed
into the hidden subgroup framework.  The hidden subgroup problem is an
oracle problem, but solving certain cases of it would result in
solutions for factoring, graph isomorphism, and certain shortest
lattice vector problems.  Indeed, it was hoped that an algorithm for
graph isomorphism could be found, but recent evidence suggests that
this approach may not lead to one~\cite{HMRRS06}.  As a way to
understand new techniques, this oracle problem has been very
important, and it is also one of the very few where super-polynomial
speedups have been found~\cite{IMSantha01,BaconCvD05}.

In comparison to factoring, the RFS problem has received much less
attention.  The problem is defined as a property of a tree with
labeled nodes and it was proven to be solvable with a quantum
algorithm 
super-polynomially faster than the best randomized algorithm.  This
tree was defined in terms of the Fourier coefficients over $\bbZ_2^n$.
The definition was rather technical, and it seemed that the simplicity
of the Fourier coefficients for this group was necessary for the
construction to work.  Even the variants introduced by
Aaronson~\cite{Aaronson2003} were still based on the same QFT over
$\bbZ_2^n$, which seemed to indicate that this particular abelian QFT
was a key part of the quantum advantage for
RFS.

The main result of this paper is to show that the RFS structure can be
generalized far 
more broadly.  In particular, we show that an RFS-style
super-polynomial speedup is achievable using almost any quantum
circuit, and more specifically, it is also true for any Fourier
transform (even nonabelian), not just over $\bbZ_2^n$.  This
illustrates a more general power that quantum computation has over
classical computation when using recursion.  The condition for a
quantum circuit to be useful for an RFS-style speedup is that the
circuit be {\em dispersing}, a concept we introduce to mean that it
takes many 
different inputs to fairly even superpositions over most of the
computational basis.

Our algorithm should be contrasted with the original RFS algorithm.
One of the main differences between classical and quantum computing is
so-called garbage that results from computing.  It is important in
certain cases, and crucial in recursion-based quantum algorithms
because of quantum superpositions, that intermediate computations are
uncomputed and that errors do not compound.  The original RFS
paper~\cite{SICOMP::BernsteinV1997} avoided the error issue by using
an oracle problem where every quantum state create from it had the
exact property necessary with no errors.  Their algorithm could have
tolerated polynomially small errors, but in this paper we relax this
significantly.  We show that even if we can only create states with
constant accuracy at each level of recursion, we can still carry
through a recursive algorithm which introduces new constant-sized
errors a polynomial number of times.

The main technical part of our paper shows that most quantum circuits
can be used to construct separations relative to appropriate oracles.
To understand the difficulty here, consider two problems that occur
when one tries to define an oracle whose output is related to the
amplitudes that result from running a circuit.  First, it is not clear
how to implement such an oracle since different amplitudes have
different magnitudes, and only phases can be changed easily.  Second,
we need an oracle where we can prove that a classical algorithm
requires many queries to solve the problem.  If the oracle outputs
many bits, this can be difficult or impossible to achieve.  For
example, the matrix entries of nonabelian groups can quickly reveal
which representation is being used.  To overcome these two problems we
show that there are binary-valued functions that can approximate the
complex-valued output of quantum circuits in a certain way.

One by-product of our algorithm is related to the Fourier transform of
the symmetric group.  Despite some initial promise for solving graph
isomorphism, the symmetric group QFT has still not found any
application in quantum algorithms.  One instance of our result is the
first example of a problem (albeit a rather artificial one) where the
QFT over the symmetric group is 
used to achieve a super-polynomial speedup.

\section{Statement of results}
Our main contributions are to generalize the RFS algorithm of
\cite{SICOMP::BernsteinV1997} in two stages.  First,
\cite{SICOMP::BernsteinV1997} described the problem of Fourier
sampling over $\bbZ_2^n$, which has an $O(1)$ vs. $\Omega(n)$ separation
between quantum and randomized complexities.  We show that here the
QFT over $\bbZ_2^n$ can be replaced with a QFT over any group, or for
that matter with almost any quantum circuit.  Next,
\cite{SICOMP::BernsteinV1997} turned Fourier sampling into recursive
Fourier sampling with a recursive technique.  We will generalize this
construction to cope with error and to amplify a larger class of
quantum speedups.  As a result, we can turn any of the linear speedups
we have found into superpolynomial speedups

Let us now explain each of these steps in more detail.  We replace the
$O(1)$ vs $\Omega(n)$ separation based on Fourier sampling with a
similar separation based on a more general problem called {\em oracle
  identification}.  In the oracle identification problem, we are given
access to an oracle $\cO_a:X\ra \{0,1\}$ where $a\in A$, for some sets
$A$ and $X$ with $\log |A|, \log |X|=\Theta(n)$.  Our goal is to determine the identity of $a$.  Further,
assume that we have access to a testing oracle $T_a:A\ra\{0,1\}$
defined by $T_a(a')=\delta_{a,a'}$, that will let us confirm that we
have the right answer.\footnote{This will later allow us to turn two-sided
into one-sided error; unfortunately it also means that a
non-deterministic Turing machine can find $a$ with a single query to
$T_a$.  Thus, while the oracle defined in BV is a candidate for
placing BQP outside PH, ours will not be able to place BQP outside of
NP.  This limitation appears not to be fundamental, but we will leave
the problem of circumventing it to future work.}

A quantum algorithm for identifying $a$ can be described as follows:
first prepare a state $\ket{\varphi_a}$ using $q$ queries to $\cO_a$,
then perform a POVM $\{\Pi_{a'}\}_{a'\in A}$ (with $\sum_{a'} \Pi_{a'}
\leq I$ to allow for the possibility of a ``failure'' outcome), using
no further queries to $\cO_a$.  The success probability is
$\bra{\varphi_a}\Pi_a \ket{\varphi_a}$.  For our purposes, it will
suffice to place a $\Omega(1)$ lower bound on this probability: say
that for each $a$, $\bra{\varphi_a}\Pi_a\ket{\varphi_a} \geq \delta$
for some constant $\delta>0$.   On the other hand, any classical algorithm
trivially requires $\geq\log (|A|\delta)=\Omega(n)$ oracle calls to identify
$a$ with success probability $\geq \delta$.  This is because each query
returns only one bit of information.
In \thmref{lin-sep} we will describe
how a large class of quantum circuits can achieve this $O(1)$
vs. $\Omega(n)$ separation, and in Theorems \ref{thm:QFT-dispersing} and
\ref{thm:random-dispersing} we will show specifically that QFTs and
most random circuits fall within this class.

Now we describe the amplification step.  This is a variant of the
\cite{SICOMP::BernsteinV1997} procedure in which making an oracle call
in the original problem requires solving a sub-problem from the same
family as the original problem.  Iterating this $\ell$ times turns query
complexity $q$ into $q^{\Theta(\ell)}$, so choosing $\ell=\Theta(\log
n)$ will yield the desired polynomial vs. super-polynomial separation.
We will generalize this construction by defining an amplified version
of oracle identification called {\em recursive oracle identification}.
This is described in the next section, where we will see how it gives
rise to superpolynomial speedups from a broad class of circuits.

We conclude that quantum speedups---even superpolynomial
speedups---are much more common than the conventional wisdom would
suggest.  Moreover, as useful as the QFT has been to quantum
algorithms, it is far from the only source of quantum algorithmic advantage.

\section{Recursive amplification}
\label{sec:recur}

In this section we show that once we are given a constant versus
linear separation (for quantum versus classical oracle
identification), we are able to amplify this to a super-polynomial
speedup.  We require a much looser definition than in
\cite{SICOMP::BernsteinV1997} because the constant case can have a
large error.

\begin{definition}\label{def:single-level}
For sets $A,X$, let $f: A\times X\to \{0,1\}$ be a function.  To set
the scale of the problem, let $|X|=2^n$ and $|A|=2^{\Omega(n)}$.
Define the set of oracles $\{\cO_a: a\in A\}$ by $\cO_a(x)=f(a,x)$,
and the states $\ket{\varphi_a}=\frac{1}{\sqrt{|X|}}\sum_{x\in
  X}(-1)^{f(a,x)}\ket{x}$.  The single-level oracle identification
problem is defined to be the task of determining $a$ given access to
$\cO_a$. 
Let $U$ be a family of quantum circuits,
implicitly depending on $n$.  We say that
$U$ solves the single-level oracle identification problem if
$$|\bra{a}U\ket{\varphi_a}|^2 \geq \Omega(1)$$
for all sufficiently large $n$ and all $a\in A$.  In this case, we
define the POVM $\{\Pi_a\}_{a\in A}$ by $\Pi_a = U^\dag\proj{a}U$.
\end{definition}

When this occurs, it means that $a$ can be identified from $\cO_a$
with $\Omega(1)$ success probability and using a single query.  In the
next section, we will show how a broad class of unitaries $U$ (the
so-called {\em dispersing} unitaries) allow us to construct $f$ for
which $U$ solves the single-level oracle identification problem.
There are natural generalizations to oracle identification problems
requiring many queries, but we will not explore them here.

\begin{theorem}\label{thm:superpoly-sep}
  Suppose we are given a single-level oracle problem with function $f$
  and unitary $U$ running in time $\poly(n)$.  Then we can construct a
  modified oracle problem from $f$ which can be solved by a quantum
  computer in polynomial time (and queries), but requires
  $n^{\Omega(\log n)}$ queries for any classical algorithm that
  succeeds with probability $\frac{1}{2}+n^{-o(\log n)}$.
\end{theorem}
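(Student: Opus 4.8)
The plan is to follow the Bernstein--Vazirani recursion but with the error control made explicit. Starting from the single-level problem $(f,U)$, I would define a tree of depth $\ell = \Theta(\log n)$ in which each internal node carries a label $a \in A$, and the label of a node determines, via $\cO_a$, the labels one level down: concretely, to learn the bit $\cO_a(x)$ at a node one must first recursively identify the label $a_x$ of the appropriate child and then read off a designated bit of $a_x$. At the leaves, the bits are given directly by the oracle. The recursive oracle identification problem is to output the label of the root. A quantum algorithm solves a node by preparing $\ket{\varphi_a}$ — which now requires recursively solving the $|X| = 2^n$ children in superposition, each to constant accuracy — and then applying the POVM $\{\Pi_a\}$. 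Unrolling, the quantum query (and time) cost satisfies $Q(\ell) = \poly(n)\cdot Q(\ell-1)$, so $Q(\ell) = \poly(n)^{\ell} = n^{O(\log n)} = \poly(n)$ for $\ell = \Theta(\log n/\log\log n)$, which I would arrange to be $\poly(n)$; the classical cost satisfies $C(\ell) \geq (\Omega(n))^{\ell} = n^{\Omega(\log n)}$ because each level forces $\Omega(n)$ queries by the one-bit-per-query argument from the introduction.

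The main technical work, and the place this departs from \cite{SICOMP::BernsteinV1997}, is propagating error through the recursion. At a node of depth $k$, the algorithm does not have the exact state $\ket{\varphi_a}$ but an approximation $\ket{\tilde\varphi_a}$ built from child-states that are themselves only constant-fidelity. I would track a fidelity (or trace-distance) parameter $\eta_k$ and show a recurrence of the form $\eta_{k} \leq \eta_{k-1} + (\text{new error introduced at this level})$, where the new error is itself a constant per level coming from (i) the $O(1)$-rather-than-$1$ success probability of the single-level POVM and (ii) the use of the testing oracle $T_a$ to clean up: given a candidate $a'$ from $\Pi_{a'}$, one query to $T_a$ verifies it, which converts constant two-sided error into one-sided error and lets us amplify the per-node success probability close to $1$ by $O(\log(1/\text{target}))$ repetitions before recursing further. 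The key point is that amplification at each node costs only a multiplicative $\poly(n)$ (in fact $\polylog$) in queries while driving the per-node error low enough that the accumulated error over $\ell = \Theta(\log n)$ levels stays below, say, $1/3$; since errors add rather than compound once each level is individually amplified, $\ell \cdot (\text{tiny}) = o(1)$.

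The hard part will be making the "errors add, not multiply" claim rigorous in the presence of superposition: the depth-$k$ state is a superposition over $2^n$ branch outcomes, each carrying its own independent error, so I need a hybrid argument showing that replacing the ideal child subroutine by the approximate one at every node of one level changes the global state by at most the per-node error (not $2^n$ times it) — this works because the branches are orthogonal and the subroutine is applied coherently and controlled on the branch register, so the errors combine in an $\ell_2$ sense across a level but only additively across levels. I would formalize this with a telescoping argument over the $\ell$ levels, bounding $\||\Psi_{\text{alg}}\> - |\Psi_{\text{ideal}}\>\| \leq \sum_{k=1}^{\ell} \epsilon_k$ where $\epsilon_k$ is the one-level replacement error, and then choosing the amplification at level $k$ so that $\epsilon_k \leq 1/(3\ell)$. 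Finally, the classical lower bound: I would use a direct-sum / adversary argument over the recursion tree, noting that any classical algorithm learning the root label must, on typical inputs, effectively solve $\Omega(n)$ subproblems at the next level, and that the testing oracle does not help a bounded-error (as opposed to nondeterministic) algorithm; iterating gives $n^{\Omega(\log n)}$, and the robustness of this bound to success probability $\tfrac12 + n^{-o(\log n)}$ follows because the per-level advantage degrades by at most a constant factor, so $\ell = \Theta(\log n)$ levels drive any $n^{o(\log n)}$-query algorithm's advantage to $o(1)$.
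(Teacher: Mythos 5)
There is a genuine gap, and it sits exactly at the point the paper identifies as its main technical contribution: the cost of error control inside the recursion. Your recurrence ``$Q(\ell)=\poly(n)\cdot Q(\ell-1)$, so $Q(\ell)=\poly(n)^{\ell}=\poly(n)$ for $\ell=\Theta(\log n/\log\log n)$'' does not work: $\poly(n)^{\ell}=n^{\Theta(\ell)}$ is superpolynomial for any $\ell=\omega(1)$, and shrinking $\ell$ to $\log n/\log\log n$ simultaneously weakens the classical bound to $n^{\Omega(\log n/\log\log n)}$, which no longer matches the $n^{\Omega(\log n)}$ claimed in the theorem. The same problem infects your error-management plan: if each node must be amplified to error $\leq 1/(3\ell)$ so that errors ``add up'' to $o(1)$ over $\ell$ levels, the number of repetitions per node is $\Theta(\log\ell)=\Theta(\log\log n)$, and the total query count becomes $(\Theta(\log\log n))^{\Theta(\log n)}$, which is again superpolynomial. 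The resolution in the paper is that the per-level error target is a \emph{fixed constant} $\eps=(\delta/8)^2$, independent of $\ell$: at each node the algorithm makes a constant number $m=O(\delta^{-1}\log\delta^{-1})$ of coherent recursive calls, applies the phase, \emph{uncomputes} the recursive call ($\textsc{FIND}^{\dag}$), measures and verifies with the testing oracle, so that the output error is $\leq\eps$ no matter how deep the recursion below is. Errors do not accumulate across levels at all (each level re-purifies to error $\eps$), so the query recursion is $Q(k)=2mQ(k+1)+2m$ with \emph{constant} $m$, giving $(2m)^{O(\ell)}=n^{O(1)}$ for $\ell=\log n$. Your telescoping bound $\sum_k \eps_k$ with $\eps_k\leq 1/(3\ell)$ is therefore not just unnecessary; insisting on it is what breaks polynomiality. (Your $\ell_2$-across-branches observation is sound and is implicitly present in the paper's induction, but it does not rescue the parameter choices.)

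Separately, the classical lower bound is asserted rather than proved: ``each level forces $\Omega(n)$ queries'' is not a valid direct-sum argument here, because a classical algorithm can query nodes anywhere in the tree adaptively, and bits learned at children leak partial information about a parent's secret. The paper handles this with a potential-function argument (Lemma~\ref{lem:classical-lower-bound}): it tracks which nodes have been ``hit'' with their correct secret, defines the invariant $Z=\sum_{x\in S}(\log|A|/3)^{-d(x)}$, and shows that a leaf query adds at most $(\log|A|/3)^{-\ell}$ while an internal query adds at most $2/(|A|^{1/3}-Q)$ in expectation, which yields the $\frac12+n^{-\Omega(\log n)}$ bound against $n^{o(\log n)}$ queries. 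Some argument of this kind (or an equivalently careful adversary/information argument) is needed; the sketch as written would not survive the adaptivity of the classical algorithm.
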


We start by defining the modified version of the problem
(\defref{recursive} below), and
describing a quantum algorithm to solve it.  Then in
\thmref{recursive-correct} we will show that the quantum algorithm
solves the problem correctly in polynomial time, and in
\thmref{amplification}, we will show that randomized classical
algorithms require superpolynomial time to have a nonnegligible
probability of success.

The recursive version of the problem simply requires that another
instance of the problem be solved in order to access a value at a
child.  \fig{recursive} illustrates the structure of the
problem.

\vspace{-0.2cm}
\begin{figure}
\begin{center}
\includegraphics[width=.4\textwidth]{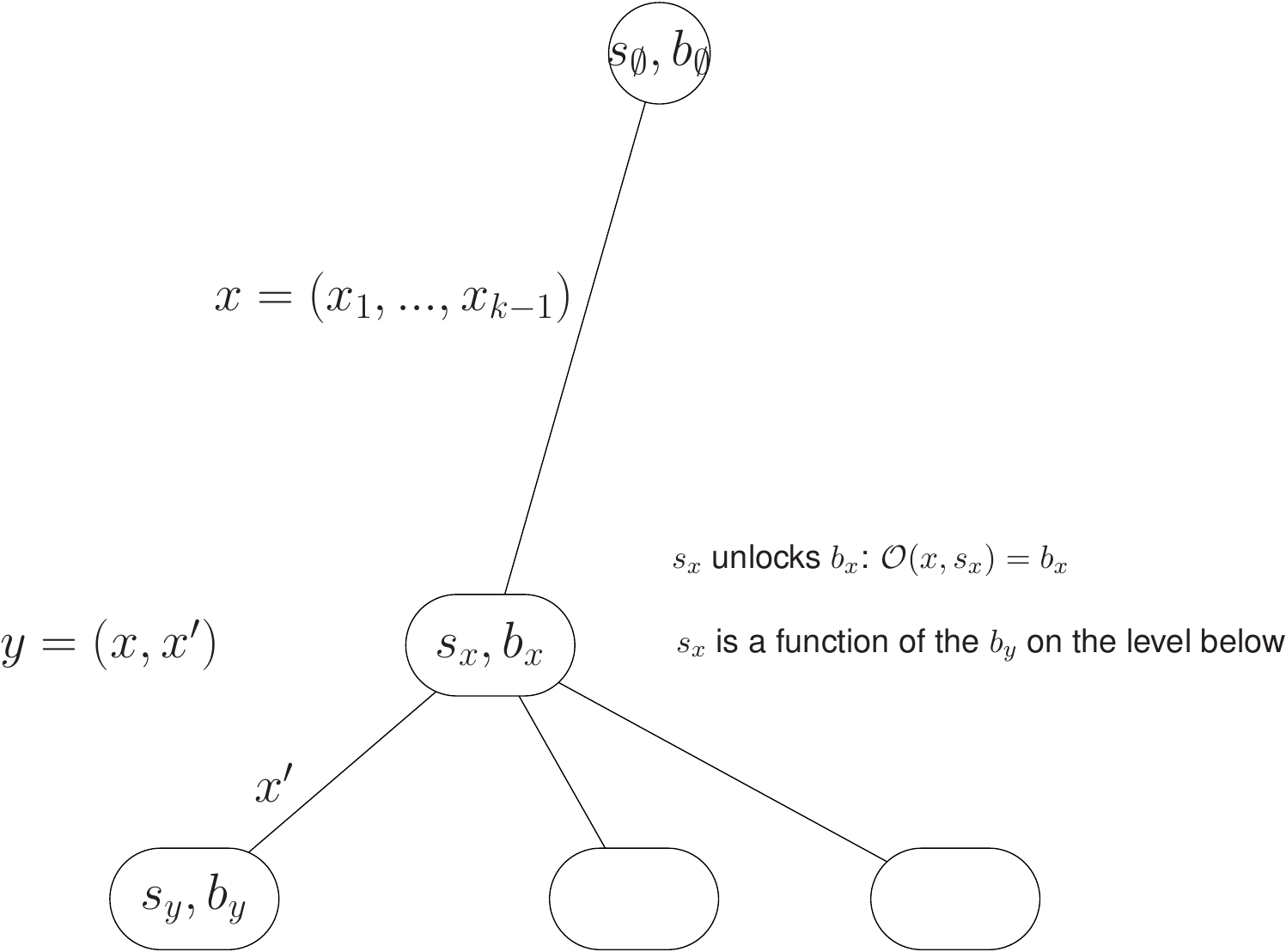}
  \caption{A depth $k$ node at location $x=(x_1,\ldots,x_k)$ is
    labeled by its secret $s_x$ and a bit $b_x$.  The secret $s_x$
    can be computed from the bits $b_y$ of its children, and once it
    is known,
    the bit $b_x$ is computed from the oracle $\cO(x,s_x)=b_x$.  If $x$
    is a leaf then it has no secret and we simply have $b_x=\cO(x)$.
 The goal is to compute the secret bit $b_\emptyset$ at
    the root. 
    \label{fig:recursive}}
\end{center}
\end{figure}
\vspace{-1cm}


Using the notation from \fig{recursive}, the relation between a secret
$s_x$, and the bits $b_y$ of its children is given by $b_y=
f(s_x,x')$, where $f$ is the function from the single-level oracle
identification problem.  Thus by computing enough of the bits
$b_{y_1}, b_{y_2},\ldots$
corresponding to children $y_1,y_2,\ldots$, we can solve the
single-level oracle identification problem to find $s_x$.  Of course
computing the $b_y$ will require finding the secret strings $s_y$,
which requires finding the bits of {\em their} children and so on, 
until we reach the bottom layer where queries return answer
 bits without the need to first produce secret strings.

\begin{definition}\label{def:recursive}
A level-$\ell$ recursive oracle identification problem is specified by
$X, A$ and $f$ from a single-level oracle identification problem
(\defref{single-level}), any function $s:\emptyset \cup X \cup X\times
X \cup \ldots \cup X^{\ell-1} \ra A$, and any final answer
$b_\emptyset\in\{0,1\}$.   Given these ingredients, an oracle $\cO$ is
defined which takes inputs in 
$$\bigcup_{k=0}^{\ell-1} \l[X^k \times A\r] \cup X^\ell$$ 
 and to return outputs in $\{0,1,\textsc{FAIL}\}$.  On inputs
 $x_1,\ldots,x_k\in X, a\in A$ with $1\leq k<\ell$, $\cO$ returns
\ba \cO(x_1,\ldots,x_k, a) &=
 f(s(x_1,\ldots,x_{k-1}), x_k) 
&& \text{when } a = s(x_1,\ldots,x_k)\\
\cO(x_1,\ldots,x_k, a) &=
\textsc{FAIL} && \text{when } a\neq s(x_1,\ldots,x_k).
\ea
If $k=0$, then $\cO(s(\emptyset))=b_\emptyset$ and
$\cO(a)=\textsc{FAIL}$ if $a\neq s(\emptyset)$.   When $k=\ell$,
$$\cO(x_1,\ldots,x_{\ell}) = f(s(x_1,\ldots,x_{\ell-1}), x_\ell).$$
The recursive oracle identification problem is to determine $b_\emptyset$
given access to $\cO$.
\end{definition}

Note that the function $s$ gives the values $s_x$ in \fig{recursive}.
These values are actually defined in the oracle and can be chosen
arbitrarily at each node.  Note also that the oracle defined here
effectively includes a testing oracle, which can determine whether $a
= s(x_1,\ldots,x_k)$ for any $a\in A, x_1,\ldots,x_k\in X$ with one
query.  (When $x=(x_1,\ldots,x_k)$, we use $s(x_1,\ldots,x_k)$ and
$s_x$ interchangeably.)  A significant difference between our
construction and that of \cite{SICOMP::BernsteinV1997} is that the
values of $s$ at different nodes can be set completely independently
in our construction, whereas \cite{SICOMP::BernsteinV1997} had a
complicated consistency requirement.

  {\bf The algorithm.} Now we turn to a quantum algorithm for the
  recursive oracle identification problem. If a quantum computer can
  identify $a$ with one-sided\footnote{One-sided error is a reasonable
    demand given our access to a testing oracle.  Most of these
    results go through with two-sided error as well, but for
    notational simplicity, we will not explore them here.} error
  $1-\delta$ using time $T$ and $q$ queries in the non-recursive
  problem, then we will show that the recursive version can be solved
  in time $O((q\frac{\log 1/\delta}{\delta})^\ell T)$.  For
  concreteness, suppose that
  $\ket{\varphi_a}=\frac{1}{\sqrt{|X|}}\sum_{x\in
    X}(-1)^{f(a,x)}\ket{x}$, so that $q=1$; the case when $q>1$ is an
  easy, but tedious, generalization.  Suppose that our identifying
  quantum circuit is $U$, so $a$ can be identified by applying the
  POVM $\{\Pi_{a'}\}_{a'\in A}$ with $\Pi_{a'}=U^\dag\proj{a'}U$ to
  the state $\ket{\varphi_a}$.

The intuitive idea behind our algorithm is as follows: At each level, we find
$s(x_1,\ldots, x_k)$ by recursively computing $s(x_1,\ldots,x_{k+1})$ for each
$x_{k+1}$ (in superposition) and using this information to create
many copies of $\ket{\varphi_{s(x_1,\ldots,x_k)}}$, from which we can
extract our answer.  However, we need to account for the errors
carefully so that they do not blow up as we iterate the recursion.  In
what follows, we will adopt the convention that Latin letters in kets
(e.g. $\ket{a}, \ket{x}, \ldots$) denote computational basis states,
while Greek letters (e.g. $\ket{\zeta},\ket{\varphi},\ldots$) are
general states that are possibly superpositions over many
computational basis states.  Also, we let the subscript $_{(k)}$
indicate a dependence on $(x_1,\ldots,x_k)$.  The recursive oracle
identification algorithm is as follows:

\begin{tabbing}
  ~~~~~ \= ~~~~ \= ~~~~ \= ~~~~ \= ~~~~ \= ~~~~ \= \kill
  {\bf Algorithm: \textsc{FIND}}\\
  {\bf Input:} $\ket{x_1,\ldots,x_k}\ket{0}$ for $k<\ell$\\
  {\bf Output:} $a_{(k)}=s(x_1,\ldots,x_k)$ up to error $\eps = (\delta/8)^2$,
  where $\delta$ is the constant from the oracle.
  This means\\
  $\ket{x_1,\ldots,x_k}\l[\sqrt{1-\eps_{(k)}}\ket{0}\ket{a_{(k)}}\ket{\zeta_{(k)}}
  + \sqrt{\eps_{(k)}}\ket{1}\ket{\zeta_{(k)}'}\r]$, where
  $\eps_{(k)}\leq \eps$ and $\ket{\zeta_{(k)}}$ and $\ket{\zeta_{(k)}'}$
  are arbitrary.\\
(We can assume this form without loss of generality by absorbing
phases into $\ket{\zeta_{(k)}}$ and $\ket{\zeta_{(k)}'}$.)
  \\
  {\bf 1.}\> Create the superposition
  $\frac{1}{\sqrt{|X|}}\sum_{x_{k+1}\in X}\ket{x_{k+1}}$.\\
  {\bf 2.}\> If $k+1<\ell$ then 
  let $a_{(k+1)}=\textsc{FIND}(x_1,\ldots,x_{k+1})$ (with error $\leq \eps$),
  otherwise $a_{(k+1)}=\emptyset$.\\
  {\bf 3.}\> Call the oracle $\cO(x_1,\ldots,x_{k+1},a_{(k+1)})$ to apply
  the phase $(-1)^{f(s(x_1,\ldots,x_k), x_{k+1})}$ using the key $a_{(k+1)}$.\\
  {\bf 4.}\> If $k+1<\ell$ then call \textsc{FIND}$^\dag$ to
  (approximately) uncompute 
  $a_{(k+1)}$.\\ 
  {\bf 5.}\> We are now left with $\ket{\tilde{\varphi}_{(k)}}$, which is close to
  $\ket{\varphi_{s(x_1,\ldots,x_k)}}$.\\
  \>Repeat steps 1--4 $m = \frac{4}{\delta}\ln\frac{8}{\delta}$ times to obtain
  $\ket{\tilde{\varphi}_{(k)}}^{\ot m}$\\
  {\bf 6.}\> Coherently measure $\{\Pi_a\}$ on each copy
  and test the
  results (i.e.\ apply $U$, test the result, and apply $U^\dag$).  \\
  {\bf 7.}\> If any tests pass, copy the correct $a_{(k)}$ to an output
  register, along with $\ket{0}$ to indicate success.\\
  \>  Otherwise put a $\ket{1}$ in the output to indicate failure.\\
  {\bf 8.}\> 
  Let everything else comprise the
  junk register $\ket{\zeta_{(k)}}$.\\
\end{tabbing}

\begin{theorem}\label{thm:recursive-correct}
  Calling $\textsc{FIND}$ on $|0\rangle$ solves the recursive oracle
  problem in quantum polynomial time.
\end{theorem}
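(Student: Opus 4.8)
The plan is to analyze \textsc{FIND} by induction on the recursion depth, tracking two quantities at each level $k$: the error $\eps_{(k)}$ with which the secret $a_{(k)}$ is produced, and the total running time. The key claim of the induction is that \emph{if} every recursive call to \textsc{FIND} at level $k+1$ returns the correct secret up to error $\leq \eps = (\delta/8)^2$, \emph{then} the call at level $k$ also achieves error $\leq \eps$. The base case is the leaf level $k=\ell$, where no secret is needed and the oracle directly returns the phase $(-1)^{f(s_{(\ell-1)},x_\ell)}$ with no error at all, so $\ket{\tilde\varphi_{(\ell-1)}} = \ket{\varphi_{s_{(\ell-1)}}}$ exactly.

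For the inductive step I would first control the error in a single execution of steps 1--4. By hypothesis, step 2 produces a state within trace distance $O(\sqrt{\eps})$ of $\ket{x_{k+1}}\ket{0}\ket{a_{(k+1)}}\ket{\zeta_{(k+1)}}$ (summed coherently over $x_{k+1}$); steps 3 and 4 are unitary and the oracle call on the ``good'' branch applies exactly the right phase, so after uncomputing, the garbage and the extra $\ket{0}/\ket{1}$ flag decouple up to an error that is again $O(\sqrt{\eps})$ in norm. Hence $\ket{\tilde\varphi_{(k)}}$ is within $O(\sqrt{\eps}) = O(\delta/8)$ of $\ket{\varphi_{s_{(k)}}}$; choosing constants so this distance is, say, $\leq \delta/4$, each copy has overlap $\geq \sqrt{\delta}/2$ (squared: $\geq \delta/4$, say) with the subspace on which the POVM $\{\Pi_a\}$ correctly identifies $s_{(k)}$. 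Now repeat $m = \frac{4}{\delta}\ln\frac{8}{\delta}$ times (step 5): the probability that \emph{no} copy yields the correct $a$ upon coherent measurement-and-test is at most $(1-\Omega(\delta))^m \leq e^{-\Omega(\delta m)} \leq \eps$, by the choice of $m$. Because step 6 \emph{tests} each outcome against the oracle's built-in testing oracle, a ``success'' flag is never a false positive --- the one-sided error structure is exactly what lets us claim that, conditioned on the flag being $\ket{0}$, the register $\ket{a_{(k)}}$ is correct with certainty --- so the only error is the all-copies-fail event, of weight $\leq \eps$. This closes the induction on correctness.

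For the running time, let $T_k$ be the cost of \textsc{FIND} at level $k$. Steps 1--4 make $m$ groups of (one recursive call + its inverse + one oracle query), and the recursive calls go to level $k+1$, so $T_k \leq m(2T_{k+1} + O(1)) + \poly(n)$ for the measurement/testing/copying overhead, where the $\poly(n)$ absorbs the cost of applying $U$ and $U^\dag$. Unrolling, $T_0 = (O(m))^{\ell}\cdot\poly(n) = \left(O\!\left(\frac{1}{\delta}\log\frac{1}{\delta}\right)\right)^{\ell}\poly(n)$; with $\delta$ a fixed constant and $\ell = \Theta(\log n)$ this is $2^{O(\log n)} = \poly(n)$, and likewise the query count is polynomial. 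The final output is the flag and secret at the root; when $k=0$ the ``secret'' is just $b_\emptyset$, read off the oracle on input $s(\emptyset)$, so \textsc{FIND} on $\ket{0}$ returns $b_\emptyset$ correctly with probability $\geq 1-\eps$.

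The main obstacle I anticipate is the coherence bookkeeping in steps 3--4: the recursive call \textsc{FIND} is run in superposition over $x_{k+1}$, so its ``error'' is not a probability but a sub-normalized bad branch $\sqrt{\eps_{(k)}}\ket{1}\ket{\zeta'_{(k)}}$ entangled with the $x_{k+1}$ register, and the inverse call in step 4 does not exactly uncompute it. One must argue that the leftover junk on the bad branch, while uncontrolled, has small enough norm that it only perturbs $\ket{\tilde\varphi_{(k)}}$ by $O(\sqrt{\eps})$ in the $\ell_2$ sense --- and crucially that these perturbations add up linearly (not multiplicatively) across the $m$ repetitions and the $\ell$ levels, which is where the generous choice $\eps = (\delta/8)^2$ (quadratically smaller than the per-level tolerance) buys the necessary slack. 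Getting the triangle-inequality accounting right, so that $\ell$ levels of depth each contributing $O(\sqrt\eps)$ error stays below a constant, is the crux; everything else is amplification and geometric-series estimates.
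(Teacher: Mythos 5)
Your proposal follows essentially the same route as the paper's proof: backward induction on the level $k$ with the invariant that \textsc{FIND} returns its secret with (coherent, one-sided) error at most $\eps=(\delta/8)^2$, a per-call perturbation bound showing each $\ket{\tilde\varphi_{(k)}}$ is $O(\sqrt{\eps})$-close to $\ket{\varphi_{s_{(k)}}}$ so that each copy still succeeds with probability $\geq \delta/2$ under $\{\Pi_a\}$, amplification over $m=\Theta(\frac{1}{\delta}\log\frac{1}{\delta})$ copies with the testing oracle removing false positives, and the recursion $Q(k)=2mQ(k+1)+O(m)$ giving $(O(m))^{\ell}=\poly(n)$ queries for $\ell=\Theta(\log n)$. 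The one place your account goes astray is the final paragraph: you say the crux is that $O(\sqrt{\eps})$ perturbations ``add up linearly across the $m$ repetitions and the $\ell$ levels'' and that one must keep an $\ell$-fold sum below a constant. That accounting would in fact fail (with $\ell=\log n$ levels, $\ell\cdot O(\sqrt{\eps})$ is not bounded by a constant), and it is not what the argument needs: the whole point of the inductive invariant you yourself state is that errors do \emph{not} accumulate across levels. At each level the $m$-fold repetition plus testing drives the all-fail amplitude back down to $\leq\sqrt{\eps}$ (since $(1-\delta/2)^{m/2}\leq e^{-m\delta/4}\leq\sqrt{\eps}$), so the error is reset to $\leq\eps$ at every level independently of $\ell$; the only place the $O(\sqrt{\eps})$ loss enters is inside a single level, where it degrades the per-copy success probability from $\delta$ to $\delta/2$, which the choice $\eps=(\delta/8)^2$ is exactly calibrated to absorb. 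With that correction your proof matches the paper's.
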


\begin{proof}
The proof is by backward induction on $k$; we assume that the algorithm returns
with error $\leq \eps$ for $k+1$ and prove it for $k$.  The initial
step when $k=\ell$ is trivial since there is no need to compute
$a_{\ell +1}$, and thus no source of error.
If $k<\ell$, then assume that correctness of the algorithm has already
been proved for $k+1$.  Therefore Step 2 leaves the state
$$\frac{1}{\sqrt{|X|}}\sum_{x_{k+1}\in X} \ket{x_{k+1}}
\l[\sqrt{1-\eps_{(k+1)}}\ket{0}\ket{a_{(k+1)}}\ket{\zeta_{(k+1)}}
+ \sqrt{\eps_{(k+1)}}\ket{1}\ket{\zeta_{(k+1)}'}\r].$$
In Step 3, we assume for simplicity that the oracle was called
conditional on the success of Step 2.  This yields
$$\ket{\psi'_{(k)}}:=
\frac{1}{\sqrt{|X|}}\sum_{x_{k+1}\in X} \ket{x_{k+1}}
\l[(-1)^{f(a_{(k)}, x_{k+1})}
\sqrt{1-\eps_{(k+1)}}\ket{0}\ket{a_{(k+1)}}\ket{\zeta_{(k+1)}}
+ \sqrt{\eps_{(k+1)}}\ket{1}\ket{\zeta_{(k+1)}'}\r].$$
Now define the state $\ket{\psi_{(k)}}$ by
$$ \ket{\psi_{(k)}} :=
\frac{1}{\sqrt{|X|}}\sum_{x_{k+1}\in X} (-1)^{f(a_{(k)}, x_{k+1})}
\ket{x_{k+1}}
\l[\sqrt{1-\eps_{(k+1)}}\ket{0}\ket{a_{(k+1)}}\ket{\zeta_{(k+1)}}
+ \sqrt{\eps_{(k+1)}}\ket{1}\ket{\zeta_{(k+1)}'}\r].$$
Note that
$$\braket{\psi'_{(k)}}{\psi_{(k)}} = \frac{1}{|X|}
\sum_{x_{k+1}\in X} \left(1-\eps_{(k+1)} + (-1)^{f(a_{(k)},x_{k+1})}
  \eps_{(k+1)}\right).$$ This quantity is real and always $\geq
1-2\eps_{(k+1)}\geq\sqrt{1-4\eps}$ by the induction hypothesis.  Let
$$|\phi_{(k)}\rangle := \frac{1}{|X|} \sum_{x_{k+1}\in X}
(-1)^{f(a_{(k)},x_{k+1})} |x_{k+1}\rangle|0\rangle.$$ Note that
$\textsc{FIND}^\dag |x_1,\ldots,x_k,\psi_{(k)}\rangle =
|x_1,\ldots,x_k,\phi_{(k)}\rangle$. Thus there exists $\eps_{(k)}$ such
that applying $\textsc{FIND}^\dag$ to
$\ket{x_1,\ldots,x_k}\ket{\psi'_{(k)}}$ yields
$$\ket{x_1,\ldots,x_k}\otimes
\l[\sqrt{1-4\eps_{(k)}}\ket{\phi_{(k)}}+
\sqrt{4\eps_{(k)}}\ket{\phi'_{(k)}}\r],$$
where $\braket{\phi_{(k)}}{\phi_{(k)}'}=0$ and $\eps_{(k)}\leq
\eps$.  

We now want to analyze the effects of measuring $\{\Pi_{a}\}$ when we
are given the state $$\ket{\varphi_{(k)}} :=
\sqrt{1-4\eps_{(k)}}\ket{\phi_{(k)}}+ \sqrt{4\eps_{(k)}}\ket{\phi'_{(k)}}$$
instead of $\ket{\phi_{(k)}}$.  If we define $\|M\|_1=\tr\sqrt{M^\dag
  M}$ for a matrix $M$, then $\| \proj{\varphi_{(k)}}-\proj{\phi_{(k)}}\|_1
= 4 \sqrt{\eps_{(k)}}$ \cite{FG97}.  Thus
$$\bra{\varphi_{(k)}}\Pi_{a_{(k)}}\ket{\varphi_{(k)}} \geq
\bra{\phi_{(k)}}\Pi_{a_{(k)}}\ket{\phi_{(k)}} -
4\sqrt{\eps_{(k)}}
\geq \delta - 4\sqrt{\eps_{(k)}} \geq \delta/2.$$
In the last step we have chosen $\eps = (\delta/8)^2$.

Finally, we need to guarantee that with probability $\geq 1-\eps$ at
least one of the tests in Step 6 passes.  After applying $U$ and the
test 
oracle to $\ket{\varphi_{(k)}}$, we have $\geq \sqrt{\delta/2}$ overlap
with a successful test and $\leq \sqrt{1-\delta/2}$ overlap with an
unsuccessful test.  When we repeat this $m$ times, the amplitude in
the subspace corresponding to all tests failing is $\leq
(1-\delta/2)^{m/2}\leq e^{-m\delta/4}$.  If we choose
$m=(2/\delta)\ln(1/\eps)=(4/\delta)\ln(8/\delta)$ then the failure
amplitude will be $\leq \sqrt{\eps}$, as desired.

To analyze the time complexity, first note that the run-time is $O(T)$
times the number of queries made by the algorithm, and we have assumed
that $T$ is polynomial in $n$.  Suppose the algorithm at level $k$
requires $Q(k)$ queries.  Then steps 2 and 4 require $mQ(k+1)$ queries
each, steps 3 and 6 require $m$ queries each and together
$Q(k)=2mQ(k+1)+2m$.  The base case is $k=\ell$, for which $Q(\ell)=0$,
since there are no secret strings to calculate for the leaves.  The
total number of queries required for the algorithm is then
$Q(0)\approx (2m)^{2\ell}$.
If we choose $\ell = \log n$ the quantum query complexity will thus be
$n^{2\log 2m} = n^{O(1)}$ and the quantum complexity will be polynomial in $n$
compared with the $n^{\Omega(\log n)}$ lower bound.  
\end{proof}

This concludes the demonstration of the polynomial-time quantum
algorithm.  Now we turn to the classical $n^{\Omega(\log n)}$ lower
bound.  Our key technical result is the following lemma:

\begin{lemma}\label{lem:classical-lower-bound}
  Define the recursive oracle identification problem as above, with a
  function $f:A \times X\ra \{0,1\}$ and a secret $s:\emptyset \cup X
  \cup X\times X \cup \ldots \cup X^{\ell-1}\mapsto A$ encoded in an
  oracle $\cO$.  Fix a deterministic classical algorithm that makes
  $\leq Q$ queries to $\cO$.  Then if $s$ and $\ANS$ are chosen uniformly at
  random, the probability that $\ANS$ is output by the
  algorithm is
$$\leq \frac{1}{2} + \max\left( \frac{Q}{|A|^{1/3} - Q},
Q\left(\frac{\log|A|}{3}\right)^{-\ell}
\right).$$
\end{lemma}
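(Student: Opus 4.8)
The plan is to reduce the claimed bound to a statement about a "query game" on the tree from Figure~\ref{fig:recursive}, and then bound the adversary's advantage by induction on $\ell$. Since $s$ and $\ANS$ are uniform and the oracle only ever reveals $f(s_x,\cdot)$ at a node $x$ \emph{after} the correct secret $s_x$ has been supplied, the key observation is that a query deep in the tree is "wasted" unless the algorithm has already guessed the secret at every ancestor. First I would set up the natural filtration: track, after each query, the set of nodes whose secret the algorithm has correctly identified (equivalently, the nodes at which it has received a non-$\textsc{FAIL}$ answer). Call a query \emph{productive} if it is of the form $\cO(x_1,\ldots,x_k,a)$ with $a = s(x_1,\ldots,x_k)$, i.e.\ it does not return $\textsc{FAIL}$; only productive queries give the algorithm any information about $f$, hence about secrets one level up.

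The heart of the argument is a "guessing lemma": at any node $x$ whose secret has not yet been leaked, the algorithm has made at most $Q$ queries total, so it has seen at most $Q$ of the values $\{f(s_x,x') : x'\in X\}$; conditioned on this partial information, $s_x$ is still nearly uniform over $A$. Concretely, knowing $Q$ bits of $f(s_x,\cdot)$ restricts $s_x$ to a set of expected size $\geq |A|/2^{Q}$ only if the rows of $f$ were generic — but we cannot assume that. Instead the right move is to note that the oracle also answers testing queries $\cO(x_1,\ldots,x_k,a)$ for $a\neq s_x$ with $\textsc{FAIL}$, so each query eliminates at most one candidate value of $s_x$; therefore after $Q$ queries at least $|A|-Q$ candidates remain, and the probability that the \emph{next} query to node $x$ is productive is $\leq 1/(|A|-Q)$ — wait, this only works at the bottom layer. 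One level up, a productive query to a child $y$ reveals the bit $f(s_x,y')$, and $\Theta(\log|A|)$ such bits can in principle pin down $s_x$ exactly. This is exactly where the two terms in the $\max$ come from: the first term $Q/(|A|^{1/3}-Q)$ handles the regime where the algorithm tries to guess a secret directly (by testing queries), bounding the chance any such guess succeeds before $|A|^{1/3}$ of the bits of that row are known; the second term, $Q(\log|A|/3)^{-\ell}$, handles the "honest" recursive strategy, where to learn $b_\emptyset$ one must learn $s_\emptyset$, which costs learning $\approx \log|A|/3$ children's bits, each of which recursively costs $\approx \log|A|/3$ of \emph{its} children's bits, giving the $(\log|A|/3)^{-\ell}$ attenuation per query.

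So the induction I would run is: let $p_\ell(Q)$ be the maximum, over deterministic $Q$-query algorithms, of the probability of outputting $\ANS$, with the probability over uniform $s,\ANS$; show $p_0(Q) \le \tfrac12 + \text{(something)}$ trivially (at $\ell=0$ the root's bit $b_\emptyset$ is uniform and only the single productive query $\cO(s(\emptyset))$ reveals it, so $p_0(Q)\le \tfrac12 + Q/|A|$), and then show the recursion $p_\ell(Q) \le \tfrac12 + \tfrac{3Q}{\log|A|}\cdot\big(p_{\ell-1}(Q) - \tfrac12\big) + \tfrac{Q}{|A|^{1/3}-Q}$, which unrolls to the claimed $\max$. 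The factor $3/\log|A|$ comes from: to get advantage at level $\ell$ you need to have correctly identified $s_\emptyset$, and identifying $s_\emptyset$ via the function $f(s_\emptyset,\cdot)$ requires productive queries at $\ge \log|A|/3$ distinct children (otherwise too many candidates for $s_\emptyset$ survive — this is the branching/information bound, and the constant $1/3$ is chosen with room to spare so the $|A|^{1/3}$ bookkeeping goes through), and each such productive query is itself a level-$(\ell-1)$ subproblem.

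The main obstacle, and the step I would spend the most care on, is making the "identifying $s_x$ needs $\gtrsim \log|A|/3$ productive child-queries" claim rigorous \emph{uniformly over all $f$}, including adversarial $f$ with highly redundant rows — because a priori two secrets $a\neq a'$ could have $f(a,\cdot)=f(a',\cdot)$, making $s_x$ unidentifiable and, awkwardly, making the problem \emph{easier} to "fail" at but the claim is about the \emph{success} probability so that direction is fine; the genuine worry is rows that agree on a large but not complete set of coordinates. The resolution is to only ever argue about the probability of a \emph{productive} query, using that testing kills $\le 1$ candidate and that a length-$t$ prefix of $f(s_x,\cdot)$ is consistent with $s_x=a$ for a random-looking set whose size we control by a union bound over pairs — this is where the cube root exponent buys slack: we can afford to lose a union-bound factor of $|A|^2$ against a $2^{-t}$-type bound as long as $t$ can reach $\approx \tfrac{2}{3}\log|A|$, and we arrange the parameters so it never needs to. I would also need to handle the reduction from randomized to deterministic algorithms (standard averaging, since the hard distribution on $s,\ANS$ is fixed) and convert the "$\tfrac12 + \varepsilon$" bound on a decision problem into the query lower bound quoted in Theorem~\ref{thm:superpoly-sep} by setting $|A|=2^{\Theta(n)}$, $\ell=\log n$.
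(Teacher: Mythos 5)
Your plan correctly identifies the two regimes behind the two terms of the bound (direct guessing of a secret versus honest recursion, with an attenuation of $3/\log|A|$ per level), but as written it has two genuine gaps. The first is the induction itself: the recursion $p_\ell(Q)\le \frac12+\frac{3Q}{\log|A|}\bigl(p_{\ell-1}(Q)-\frac12\bigr)+\frac{Q}{|A|^{1/3}-Q}$ does not unroll to the claimed bound in the regime that matters. In the application $Q=n^{o(\log n)}$ while $\log|A|=\Theta(n)$, so $3Q/\log|A|\gg 1$: each level of unrolling picks up a fresh factor of $Q$, giving roughly $Q^{\ell}(3/\log|A|)^{\ell}$ plus a divergent geometric series, not $Q(\log|A|/3)^{-\ell}$. (It is also unclear what the ``$\frac12$-advantage'' of a child subproblem means, since only the root is a decision problem; the children are identification problems.) To get a single factor of $Q$ you must amortize one global query budget over all nodes of the tree, rather than charging every child subproblem the full budget $Q$. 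The paper does this with a potential function instead of a level-by-level induction: call a node \emph{hit} if it has been queried together with its correct secret, let $S$ be the set of hit nodes with no hit ancestor, and set $Z=\sum_{x\in S}(\log|A|/3)^{-d(x)}$. Learning anything about $\ANS$ forces the root to be hit, i.e.\ $Z=1$; a leaf query increases $Z$ by at most $(\log|A|/3)^{-\ell}$; and an internal query either increases $Z$ in expectation by at most $2/(|A|^{1/3}-Q)$, or, if at least $\frac13\log|A|$ of the node's children are already hit, removes those children from $S$ and hence does not increase $Z$ at all. Summing over the $Q$ queries gives the lemma in one pass.

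The second gap is your proposed resolution of the adversarial-$f$ difficulty. A union bound over pairs of secrets against a $2^{-t}$-type agreement probability is simply false for arbitrary $f$: nothing prevents $f(a,\cdot)$ and $f(a',\cdot)$ from agreeing on most or all coordinates, so the number of candidates consistent with $t$ revealed bits cannot be controlled this way uniformly in $f$. The argument that works is distributional rather than structural: the $k$ revealed child bits partition $A$ into at most $2^k$ classes $A_i$, and since $s(x)$ is uniform and independent of the other secrets, the class containing $s(x)$ is selected with probability $|A_i|/|A|$; hence a class of size below $|A|^{2/3}/2^k$ is reached with probability at most $|A|^{-1/3}$, and inside a larger class each failed test query eliminates at most one candidate, so the next query hits $x$ with probability at most $1/(|A|^{1/3}-Q)$, for a total of at most $2/(|A|^{1/3}-Q)$. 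This conditioning argument needs no assumption on $f$ and no pair counting, and the case $k>\frac13\log|A|$ (where the bits may pin $s(x)$ down exactly) is disposed of by the accounting in the potential $Z$ above, not by an information bound. With these two repairs your outline essentially becomes the paper's proof.
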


Using Yao's minimax principle and plugging in 
$|A|=2^{\alpha n}$, $\ell=\log n$ and $Q=n^{o(\log n)}$ readily yields
\begin{theorem}\label{thm:amplification}
If $\log |A| = n^{\Omega(1)}$ and $\ell=\Omega(\log n)$, then any
randomized classical algorithm using  $Q = n^{o(\log n)}$ queries will
have $\frac{1}{2}+n^{-\Omega(\log n)}$ probability of successfully outputting
$\ANS$.
\end{theorem}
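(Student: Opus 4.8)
The plan is to obtain Theorem~\ref{thm:amplification} from Lemma~\ref{lem:classical-lower-bound} in two moves: first pass from the average-case deterministic bound in the lemma to a bound on randomized algorithms via Yao's minimax principle, and then substitute the parameter regime $\log|A| = n^{\Omega(1)}$, $\ell = \Omega(\log n)$, $Q = n^{o(\log n)}$ into the $\max$ and check that both of its terms are $n^{-\Omega(\log n)}$.

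For the Yao step, fix a randomized algorithm $R$ making at most $Q$ queries (pad to exactly $Q$ if necessary; this only helps $R$), and regard $R$ as a distribution over deterministic $Q$-query algorithms $D$. Let $\mu$ be the distribution on instances of the level-$\ell$ recursive oracle identification problem in which the secret function $s$ is drawn uniformly at random and the answer bit $\ANS = b_\emptyset$ is drawn uniformly and independently from $\{0,1\}$, with $X$, $A$, $f$ fixed. Then
$$\Pr_{(s,\ANS)\sim\mu,\,R}\!\left[R \text{ outputs } \ANS\right]
= \E_{D\sim R}\,\Pr_{(s,\ANS)\sim\mu}\!\left[D \text{ outputs } \ANS\right]
\leq \max_{D}\,\Pr_{(s,\ANS)\sim\mu}\!\left[D \text{ succeeds}\right],$$
and every term on the right is at most the bound of Lemma~\ref{lem:classical-lower-bound}. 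Hence the $\mu$-average of $\Pr_{\text{coins}}[R\text{ succeeds}]$ is at most that bound, so there is a fixed instance on which $R$ succeeds with at most that probability; in particular $R$'s worst-case success probability satisfies the same estimate. (Note that increasing $\ell$ only decreases the bound, so taking $\ell = \Omega(\log n)$ rather than exactly $\Theta(\log n)$ is harmless.) It remains to show the lemma's bound equals $\tfrac12 + n^{-\Omega(\log n)}$ in the stated regime.

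Taking all logarithms base $2$, write $\log|A| \geq a n^{b}$ with $a,b>0$, $\ell \geq c'\log n$ with $c'>0$, and $Q \leq 2^{\epsilon(n)(\log n)^2}$ with $\epsilon(n)\to 0$, which is exactly what $Q = n^{o(\log n)}$ means. For the first term, $|A|^{1/3} \geq 2^{a n^{b}/3}$, and since $a n^{b}/3$ dominates $\epsilon(n)(\log n)^2$ we have $Q \leq \tfrac12 |A|^{1/3}$ for large $n$, so
$$\frac{Q}{|A|^{1/3} - Q} \leq \frac{2Q}{|A|^{1/3}}
\leq 2^{\,1 + \epsilon(n)(\log n)^2 - a n^{b}/3} \leq 2^{-a n^{b}/12}$$
for large $n$, which is below $2^{-c(\log n)^2} = n^{-c\log n}$ for every constant $c$ once $n$ is large. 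For the second term, $\log(\log|A|/3) \geq \log(a n^{b}/3) \geq \tfrac{b}{2}\log n$ for large $n$, hence $(\log|A|/3)^{\ell} \geq 2^{(bc'/2)(\log n)^2}$ and
$$Q\left(\frac{\log|A|}{3}\right)^{-\ell} \leq 2^{\,(\epsilon(n) - bc'/2)(\log n)^2} \leq 2^{-(bc'/4)(\log n)^2} = n^{-(bc'/4)\log n}$$
once $\epsilon(n) < bc'/4$. Both terms are therefore $n^{-\Omega(\log n)}$, so their maximum is as well, which proves the theorem. The only point requiring care is the bookkeeping of the $o(\cdot)$ and $\Omega(\cdot)$ in the exponents: the contribution $2^{o((\log n)^2)}$ from $Q$ must be --- and is --- swamped both by the $2^{-\Omega((\log n)^2)}$ produced by $(\log|A|/3)^{-\ell}$ (using $\ell = \Omega(\log n)$ together with $\log\log|A| = \Omega(\log n)$) and by the far larger $2^{-\Omega(n^{\Omega(1)})}$ produced by $|A|^{1/3}$; there is no deeper obstacle.
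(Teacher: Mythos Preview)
Your proposal is correct and follows exactly the approach the paper indicates: apply Yao's minimax principle to pass from the deterministic average-case bound of Lemma~\ref{lem:classical-lower-bound} to randomized algorithms, then substitute the parameters and verify that both terms of the $\max$ are $n^{-\Omega(\log n)}$. The paper compresses all of this into a single sentence (``Using Yao's minimax principle and plugging in $|A|=2^{\alpha n}$, $\ell=\log n$ and $Q=n^{o(\log n)}$ readily yields\ldots''), and your write-up is a faithful and careful expansion of that sentence, including the slightly more general hypothesis $\log|A|=n^{\Omega(1)}$ rather than $\log|A|=\Theta(n)$.
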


\begin{proof}[of \lemref{classical-lower-bound}]
 Let $T = \emptyset \cup X
  \cup \ldots \cup X^{\ell}$ denote the tree on which
  the oracle is defined.  We say that a node $x\in T$ has been {\em
    hit} by the algorithm if position $x$ has been queried by the
  oracle together with the correct secret, i.e.\ $\cO(s(x),x)$ has been
  queried.  
The only way to find to obtain information about $\ANS$ is
for the algorithm to query $\emptyset$ with the appropriate secret; in
other words, to hit $\emptyset$.

  For $x,y\in T$ we say that $x$ is an {\em ancestor} of $y$, and that
  $y$ is a {\em descendant} of $x$, if $y=x \times z $ for some $z\in
  T$.  If $z\in X$ then we say that $y$ is a {\em child} of $x$ and
  that $x$ is a {\em parent} of $y$.  Now define $S\subset T$ to be
  the set of all $x\in T$ such that $x$ has been hit but none of $x$'s
  ancestors have been.  Also define a function $d(x)$ to be the depth of a
  node $x$; i.e. for all $x\in X^k$, $d(x)=k$.  We combine these
  definitions to declare an invariant 
$$Z = \sum_{x\in S} \left(\frac{\log |A|}{3}\right)^{-d(x)}$$

The key properties of $Z$ we need are that:
\begin{enumerate}
\item Initially $Z=0$.
\item If the algorithm is successful then it terminates with $Z=1$.
\item Only oracle queries change the value of $Z$.
\item Querying a leaf can add at most $(\log
  |A|/3)^{-\ell}$ to $Z$.
\item Querying an internal node (i.e.\ not a leaf) can add at most $2
  / (|A|^{1/3} - Q)$ to  $\E Z$, where $\E$ indicates the
  expectation over random choices of $s$.
\end{enumerate}
Combining these facts yields the desired bound.

Properties 1--4 follow directly from the definition (with the
inequality in property 4 because it is possible to query a node that
has already been hit).  To establish property 5, suppose that the
algorithm queries node $x\in T$ and that it has previously hit $k$ of
$x$'s children.  This gives us some partial information about $s(x)$.
We can model this information as a partition of $A$ into $2^k$
disjoint sets $A_1,\ldots,A_{2^k}$ (of which some could be empty). From
the $k$ bits returned by the oracle on the $k$ children of $x$ we have
successfully queried, we know not only that $s(x)\in A$, but that
$s(x)\in A_i$ for some $i\in \{1,\ldots,2^k\}$.

We will now divide the analysis into two cases.  Either $k\leq
\frac{1}{3}\log|A|$ or $k>\frac{1}{3}\log|A|$.  We will argue that in
the former case, $|A_i|$ is likely to be large, and so we are unlikely
so successfully guess $s(x)$, while in the latter case even a successful
guess will not increase $Z$.  The latter case ($k>\frac{1}{3}\log|A|$)
is easier, so we consider it first.  In this case, $Z$ only changes if
$x$ is hit in this step and neither $x$ nor any of its ancestors have
been previously hit.  Then even though 
hitting $x$ will contribute $(\log |A|/3)^{-d(x)}$ to $Z$, it will
also remove the $k$ children from $S$ (as well as any other
descendants of $x$), which will decrease $Z$ by at least $k(\log
|A|/3)^{-d(x)-1} > (\log |A|/3)^{-d(x)}$, resulting in a net decrease
of $Z$.  

Now suppose that $k \leq \frac{1}{3}\log |A|$.  Recall that our
information about $s(x)$ can be expressed by the fact that $s(x)\in
A_i$ for some $i\in \{1,\ldots,2^k\}$.  Since the values of $s$ were chosen
uniformly at random, we have $\Pr(A_i)=|A_i|/|A|$.  Say that a set $A_i$
is {\em bad} if $|A_i| \leq |A|^{2/3}/2^k$.  Then for a particular bad
set $A_i$, $\Pr(A_i) \leq |A|^{-1/3}2^{-k}$.  From the union bound, we
see that the probability that {\em any} bad set is chosen is $\leq
|A|^{-1/3}$. 

Assume then that we have chosen a good set $A_i$, meaning that
conditioned on the values of the children there are $|A_i|\geq
|A|^{2/3}/2^k \geq |A|^{1/3}$ 
 possible values of $s(x)$.  However, previous failed queries at $x$
 may also have ruled out specific 
possible values of $x$.  There have been at most $Q$ queries at $x$,
so there are $\geq |A|^{1/3}-Q$ possible values of $s(x)$ remaining.
(Queries to any other nodes in the graph yield no information
on $s(x)$.)  Thus the probability of hitting $x$ is $\leq 1 /
(|A|^{1/3}-Q)$ if we have chosen a good set.  We also have a $\leq
|A|^{-1/3}$ probability of choosing a bad set, so the total probability
of hitting $x$ (in the $k \leq \frac{1}{3}\log |A|$ case) is $\leq
|A|^{-1/3} + 1/(|A|^{1/3}-Q) \leq 2 /
(|A|^{1/3}-Q)$.  Finally, hitting $x$ will increase $Z$ by at
most one, so the largest possible increase of $\E Z$ when querying a
non-leaf node is $\leq 2/(|A|^{1/3} - Q)$.  This completes the proof
of property 5 and thus the Lemma.
\end{proof}

\section{Dispersing Circuits}

In this section we define {\em dispersing} circuits and show how to
construct an oracle problem with a constant versus linear
separation from any such circuit.  In the next sections we will show
how to find dispersing circuits.  Our strategy for finding speedups
will be to start with a unitary circuit $U$ which acts on $n$ qubits
and has size polynomial in $n$.  We will then try to find an oracle
for which $U$ efficiently solves the corresponding oracle
identification problem.  
Next we need to define
a state $\ket{\varphi_a}$ that can be prepared with $O(1)$ oracle calls and
has $\Omega(1)$ overlap with $U^\dag\ket{a}$.  This is accomplished by
letting $\ket{\varphi_a}$ be a state of the form $2^{-n/2}\sum_x \pm \ket{x}$.
We can prepare $\ket{\varphi_a}$ with only two oracle calls (or one,
depending on the model), but to guarantee that $|\bra{a}U\ket{\varphi_a}|$
can be made large, we will need an additional condition on $U$.  For
any $a\in A$, $U^\dag\ket{a}$ should have amplitude that is mostly
spread out over the entire computational basis.  When this is the
case, we say that $U$ is {\em dispersing}.  The precise definition is
as follows:
\begin{definition}
Let $U$ be a quantum circuit on $n$ qubits.  
For $0<\alpha,\beta\leq 1$, we say that $U$ is
$(\alpha,\beta)$-dispersing if there exists a set $A\subseteq
\{0,1\}^n$ with $|A|\geq 2^{\alpha n}$ and
\be \sum_{x\in\{0,1\}^n}|\bra{a}U\ket{x}| \geq \beta
2^{\frac{n}{2}}. 
\label{eq:disp-def}\ee
for all $a\in A$.
\end{definition}
Note that the LHS of \eq{disp-def} can also be interpreted as the
$L_1$ norm of $U^\dag\ket{a}$.

The speedup in \cite{SICOMP::BernsteinV1997} uses $U=H^{\ot n}$, which
is (1,1)-dispersing since $\sum_x |\bra{a}H^{\ot n}\ket{x}|=2^{n/2}$ for
all $a$.  Similarly the QFT over the cyclic group is
(1,1)-dispersing.\footnote{Another possible way to generalize
  \cite{SICOMP::BernsteinV1997} is to consider other unitaries of the
  form $U=A^{\ot n}$, for $A\in\cU_2$.  However, it is not hard to
  show that the only way for such a $U$ to be $(\Omega(1),
  \Omega(1))$-dispersing is for $A$ to be of the form
  $e^{i\phi_1\sigma_z} H e^{i \phi_2\sigma_z}$.}
Nonabelian QFTs do not necessarily have the same strong dispersing
properties, but they satisfy a weaker definition that is still
sufficient for a quantum speedup.  Suppose that the measurement
operator is instead defined as $\Pi_a = U(\proj{a} \otimes I)U^\dag$, where
$a$ is a string on $m$ bits and $I$ denotes the identity operator on
$n-m$ bits. 
Then $U$ still permits oracle identification, but our
requirements that $U$ be dispersing are now relaxed.  Here, we give a
definition that is loose enough for our purposes, although further
weakening would still be possible.

\begin{definition}
Let $U$ be a quantum circuit on $n$ qubits.  
For $0<\alpha,\beta\leq 1$ and $0<m\leq n$, we say that $U$ is
$(\alpha,\beta)$-pseudo-dispersing if there exists a set $A\subseteq
\{0,1\}^m$ with $|A|\geq 2^{\alpha n}$ such that for all $a\in A$
there exists a unit vector $\ket{\psi}\in\bbC^{2^{n-m}}$ such that
\be \sum_{x\in\{0,1\}^n}|\bra{a}\bra{\psi}U\ket{x}| \geq \beta
2^{\frac{n}{2}}. 
\label{eq:ps-disp-def}\ee
\end{definition}
This is a weaker property than being dispersing, meaning that any
$(\alpha,\beta)$-dispersing circuit is also
$(\alpha,\beta)$-pseudo-dispersing.

We can now state our basic constant vs. linear query separation.
\begin{theorem}\label{thm:lin-sep}
If $U$ is $(\alpha,\beta)$-pseudo-dispersing, then there exists an oracle
problem which can be solved with one query, one use of $U$ and
success probability $(2\beta/\pi)^2$.   However, any classical
randomized algorithm that succeeds with probability $\geq \delta$
must use $\geq \alpha n + \log \delta$ queries.
\end{theorem}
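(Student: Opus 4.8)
The plan is to build the oracle problem directly from the pseudo-dispersing structure of $U$. Fix the set $A \subseteq \{0,1\}^m$ with $|A| \ge 2^{\alpha n}$ guaranteed by the definition, and for each $a \in A$ let $\ket{\psi_a} \in \bbC^{2^{n-m}}$ be the accompanying unit vector, so that $\sum_x |\bra{a}\bra{\psi_a}U\ket{x}| \ge \beta 2^{n/2}$. The idea is to choose the sign pattern of $\ket{\varphi_a} = 2^{-n/2}\sum_x (-1)^{f(a,x)}\ket{x}$ so that it aligns as closely as possible with $U^\dag(\ket{a}\ket{\psi_a})$. Concretely, write $\bra{x}U^\dag\ket{a}\ket{\psi_a} = r_x e^{i\theta_x}$ with $r_x \ge 0$, and set $f(a,x)$ to be $0$ if $\cos\theta_x \ge 0$ and $1$ otherwise (equivalently, choose the sign that makes $(-1)^{f(a,x)}$ agree in sign with $\mathrm{Re}(e^{i\theta_x})$). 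This defines $f : A \times X \to \{0,1\}$; the oracle is $\cO_a(x) = f(a,x)$, and $\ket{\varphi_a}$ is preparable with two oracle calls (one to write the phase, one to uncompute a scratch bit) as discussed in the text.

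First I would verify the quantum upper bound. The success probability of the algorithm "prepare $\ket{\varphi_a}$, apply $U$, measure, and check the first $m$ bits against the POVM element $\Pi_a = U(\proj{a}\otimes I)U^\dag$" is $\bra{\varphi_a}\Pi_a\ket{\varphi_a} = \norm{(\bra{a}\otimes I)U\ket{\varphi_a}}^2 \ge |\bra{a}\bra{\psi_a}U\ket{\varphi_a}|^2$, since $\ket{\psi_a}$ is a unit vector in the $(n-m)$-qubit space. Now
\[
|\bra{a}\bra{\psi_a}U\ket{\varphi_a}| = \Bigl|\tfrac{1}{\sqrt{|X|}}\sum_x (-1)^{f(a,x)} r_x e^{i\theta_x}\Bigr| \ge \tfrac{1}{\sqrt{|X|}}\sum_x r_x |\cos\theta_x|,
\]
where the inequality comes from taking the real part and using our choice of sign. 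The remaining point is a pointwise bound $|\cos\theta_x|$ is not available, so instead I would use the standard trick that for the best of the two sign choices one gets a factor $2/\pi$ on average: more carefully, $\sum_x (-1)^{f(a,x)} r_x e^{i\theta_x}$ has real part $\sum_x r_x|\cos\theta_x|$, and since $\E_{\text{uniform }\phi}|\cos\phi| = 2/\pi$ one cannot do this pointwise, so the clean route is to instead pick $f(a,x)$ to match the sign of $\cos(\theta_x)$ and bound $\sum_x r_x |\cos\theta_x| \ge \frac{2}{\pi}\sum_x r_x$ only in an averaged sense — hence the honest statement is obtained by choosing, among sign patterns, one achieving at least the $2/\pi$ fraction of $\sum_x r_x = \norm{U^\dag\ket{a}\ket{\psi_a}}_1 \ge \beta 2^{n/2}$, giving $|\bra{a}\bra{\psi_a}U\ket{\varphi_a}| \ge \frac{2\beta}{\pi}2^{n/2}/\sqrt{|X|} = \frac{2\beta}{\pi}$, and squaring yields success probability $(2\beta/\pi)^2$. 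This averaging-versus-pointwise subtlety in relating the $L_1$ norm to the achievable overlap is the one place that needs care, and I expect it to be the main technical obstacle; it is resolved by the observation that $\sum_x r_x|\cos\theta_x| + \sum_x r_x|\sin\theta_x| \ge \sum_x r_x$, so one of the two "axis" sign choices captures at least half, and a rotation-by-$45^\circ$ refinement pushes this to the $2/\pi$ constant.

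Finally, the classical lower bound is the easy information-theoretic argument already sketched in the introduction: each query to $\cO_a$ returns a single bit, so after $Q$ queries a deterministic algorithm has partitioned $A$ into at most $2^Q$ classes and, against a uniformly random $a \in A$, outputs the correct value with probability at most $2^Q/|A| \le 2^{Q - \alpha n}$. For this to be $\ge \delta$ we need $Q \ge \alpha n + \log \delta$; Yao's principle extends this to randomized algorithms. Assembling the quantum protocol, its success bound $(2\beta/\pi)^2$, and this counting bound completes the proof.
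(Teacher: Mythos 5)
Your overall construction is the same as the paper's: use the pseudo-dispersing set $A$ and the accompanying vectors $\ket{\psi_a}$, encode in the oracle a $\pm 1$ sign pattern approximating $U^\dag\ket{a}\ket{\psi_a}$, lower bound the overlap by a constant fraction of the $L_1$ norm $\sum_x|\bra{a}\bra{\psi_a}U\ket{x}|\ge \beta 2^{n/2}$, and get the classical bound from the fact that each query returns one bit (so $Q$ queries give success probability at most $2^{Q-\alpha n}$). All of that matches the paper's proof of \thmref{lin-sep}.

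The gap is in the one step you yourself flag as delicate: the claim that some sign vector $\vec\theta\in\{\pm 1\}^d$ achieves $|\sum_k \theta_k x_k|\ge\frac{2}{\pi}\sum_k|x_k|$ (the paper's \lemref{complex-approx}). Your proposed justification does not establish the constant $2/\pi$. The observation $\sum_x r_x|\cos\theta_x|+\sum_x r_x|\sin\theta_x|\ge\sum_x r_x$ only shows that one of the two ``axis'' sign choices captures a fraction $1/2$, which would give success probability $(\beta/2)^2$, weaker than the stated $(2\beta/\pi)^2$. The ``rotation-by-$45^\circ$ refinement'' is false: take $d=4$ with $x_k$ of equal modulus and phases $0,\pi/4,\pi/2,3\pi/4$. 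Then for each of the four test phases $\phi\in\{0,\pi/4,\pi/2,3\pi/4\}$ one has $\sum_k|x_k||\cos(\theta_k-\phi)|=\frac{1+\sqrt{2}}{4}\sum_k|x_k|\approx 0.60\sum_k|x_k|<\frac{2}{\pi}\sum_k|x_k|\approx 0.64\sum_k|x_k|$, so none of the axis or diagonal sign choices reaches the $2/\pi$ fraction (and the same phenomenon persists for any fixed finite grid of test phases). The correct argument, as in the paper, writes $\max_{\vec\theta}|\sum_k\theta_k x_k|=\max_\phi\sum_k|\mathrm{Re}(e^{i\phi}x_k)|$ and bounds the maximum over the \emph{continuous} global phase $\phi$ by its average, $\E_\phi\sum_k|x_k|\,|\cos\phi'_k|=\frac{2}{\pi}\sum_k|x_k|$ (rotational invariance plus $\E_\phi|\cos\phi|=2/\pi$); one then fixes an optimal $\phi$ and sets $\theta_k=\sgn(\mathrm{Re}(e^{i\phi}x_k))$. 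With that lemma in place, the rest of your argument (the overlap calculation giving $|\bra{a}\bra{\psi_a}U\ket{\varphi_a}|\ge\frac{2\beta}{\pi}$ and the counting lower bound $Q\ge\alpha n+\log\delta$) goes through exactly as in the paper.
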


Before we prove this Theorem, we state a Lemma about
how well states of the form $2^{-n/2}\sum_x e^{i\phi_x} \ket{x}$ can
be approximated by states of the form $2^{-n/2}\sum_x \pm
\ket{x}$.
\begin{lemma}\label{lem:complex-approx}
For any vector $(x_1,\ldots,x_d)\in\bbC^{d}$ there exists
$(\theta_1,\ldots,\theta_d)\in\{\pm 1\}^d$ such that 
$$\l|\sum_{k=1}^d x_k\theta_k\r|\geq \frac{2}{\pi}\sum_{k=1}^d \l|x_k\r|.$$
\end{lemma}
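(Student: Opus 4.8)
The plan is to pick the signs $\theta_k$ at random and estimate the expected value of $|\sum_k x_k\theta_k|$ from below by a constant times $\sum_k |x_k|$; since the expectation is a lower bound achieved by at least one choice, this gives the claim. Concretely, I would choose each $\theta_k$ independently and uniformly from $\{\pm1\}$, write $x_k = |x_k| e^{i\phi_k}$, and note that for any fixed unit phase $e^{-i\gamma}$ we have $\bigl|\sum_k x_k\theta_k\bigr| \ge \mathrm{Re}\,\bigl(e^{-i\gamma}\sum_k x_k\theta_k\bigr) = \sum_k |x_k|\theta_k\cos(\phi_k-\gamma)$. The trick is that $|z| = \max_\gamma \mathrm{Re}(e^{-i\gamma}z)$, so I cannot just take expectations of a single linear form; instead I would first fix $\gamma$ cleverly, or better, average over $\gamma$ as well.

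The cleanest route: use the identity $|z| = \frac{1}{2}\int_0^{2\pi} \bigl(\mathrm{Re}(e^{-i\gamma}z)\bigr)_+ \, \frac{d\gamma}{2}$ — more precisely, $|z| = c\int_0^{2\pi} |\mathrm{Re}(e^{-i\gamma}z)|\,d\gamma$ for the appropriate constant $c$, or the sharper one-sided version. Then
$$\E_\theta\Bigl|\sum_k x_k\theta_k\Bigr| = \E_\theta\, c\!\int_0^{2\pi}\!\Bigl|\,\mathrm{Re}\bigl(e^{-i\gamma}\textstyle\sum_k x_k\theta_k\bigr)\Bigr|\,d\gamma = c\!\int_0^{2\pi}\! \E_\theta\Bigl|\sum_k \theta_k |x_k|\cos(\phi_k-\gamma)\Bigr|\,d\gamma.$$
For the inner expectation I would apply a Khintchine-type lower bound, but actually the very simple bound $\E|\sum_k \theta_k a_k| \ge \frac{1}{?}\sum_k|a_k|$ is false in general (it scales like $\sqrt{\sum a_k^2}$), so this is where care is needed — the $L_1$ structure has to come from the $\gamma$-integral, not from the random signs.

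So the better plan is to reverse the order: \emph{first} integrate the worst-case phase trick pointwise. For each $k$ individually, there is an optimal global phase, but to handle all $k$ at once I would instead bound directly: choose $\theta_k = \mathrm{sgn}(\cos(\phi_k))$ deterministically (rotating so a chosen reference direction is real), giving $\mathrm{Re}\sum_k x_k\theta_k = \sum_k |x_k|\,|\cos\phi_k|$, then average this over a uniformly random rotation of the reference axis: $\E_\gamma \sum_k |x_k|\,|\cos(\phi_k-\gamma)| = \sum_k |x_k|\cdot \E_\gamma|\cos(\phi_k-\gamma)| = \frac{2}{\pi}\sum_k|x_k|$, using $\frac{1}{2\pi}\int_0^{2\pi}|\cos t|\,dt = \frac{2}{\pi}$. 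Hence some rotation $\gamma$ achieves $\sum_k |x_k\theta_k(\gamma)| \cdot$ — wait, I need $\bigl|\sum_k x_k\theta_k\bigr|\ge \mathrm{Re}(e^{-i\gamma}\sum_k x_k\theta_k) = \sum_k|x_k||\cos(\phi_k-\gamma)|$, and the right side has average $\frac{2}{\pi}\sum_k|x_k|$ over $\gamma$, so there exists $\gamma^*$ with $\sum_k|x_k||\cos(\phi_k-\gamma^*)|\ge \frac{2}{\pi}\sum_k|x_k|$; taking $\theta_k=\mathrm{sgn}(\cos(\phi_k-\gamma^*))$ finishes it. The main obstacle is precisely this ordering subtlety — realizing that the averaging must be over the global phase/rotation (which produces the $\frac{2}{\pi}$ from $\E|\cos|$) rather than over the signs — after which the computation is a one-line integral. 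I would also remark that $\frac{2}{\pi}$ is tight, witnessed by $x_k$ spread uniformly over all directions (or the continuous analogue).
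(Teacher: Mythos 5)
Your final argument is correct and is essentially the paper's own proof: both reduce to lower-bounding the maximum over a global phase $\gamma$ by its average, using $\E_\gamma|\cos(\phi_k-\gamma)|=\tfrac{2}{\pi}$, and then choosing $\theta_k$ to align the real parts (equivalently, $\theta_k=\sgn\bigl(\mathrm{Re}(e^{-i\gamma^*}x_k)\bigr)$). The early detour through random signs is correctly discarded, so nothing is missing.
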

The proof is in the full version of the paper\cite{HH08}.

{\em Proof of \thmref{lin-sep}:}
Since $U$ is $(\alpha,\beta)$-pseudo-dispersing, there exists a set $A\subset
\{0,1\}^m$ with $|A|\geq 2^{\alpha n}$ and satisfying \eq{ps-disp-def}
for each $a\in A$.  The problem will be to determine $a$ by querying an
oracle $\cO_a(x)$.  No matter how we define the oracle, as long as it
returns only one bit per call any classical randomized algorithm
making $q$ queries can have success probability no greater than
$2^{q-\alpha n}$ (or else guessing could succeed with probability
$>2^{-\alpha n}$ without making any queries).  This implies the
classical lower bound.

Given $a\in A$, to define the oracle $\cO_a$, first use the definition
to choose a state $|\psi\rangle$ satisfying \eq{ps-disp-def}.  Then by
\lemref{complex-approx} (below), choose a vector $\vec{\theta}$ that (when
normalized to $\ket{\theta}$) will approximate the state
$U^\dag|a\rangle|\psi\rangle$. 
Define $\cO_a(x)$ so that $(-1)^{\cO_a(x)}=\theta_x=2^{n/2}\braket{x}{\theta}$.  By construction, \be
2^{-n/2}|\bra{a}\bra{\psi}U\ket{\theta}| \geq \frac{2}{\pi}\beta
\label{eq:pm-good-approx}\ee
which implies that creating $\ket{\theta}$, applying $U$, and measuring the
first register has probability $\geq (2\beta/\pi)^2$ of yielding the correct
answer $a$.
\qed

\section{Any quantum Fourier transform is pseudo-dispersing}

In this section we start with some special cases of dispersing
circuits by showing that any Fourier transform is dispersing.  In the
next section we show that most circuits are dispersing.

The original RFS paper~\cite{SICOMP::BernsteinV1997} used the fact that
$H^{\ot n}$ is (1,1)-dispersing to obtain their starting $O(1)$ vs
$\Omega(n)$ separation.  The QFT on the cyclic group (or any abelian
group, in fact) is also (1,1)-dispersing.  In fact, if we will accept
a pseudo-dispersing circuit, then any QFT will work:

\begin{theorem}\label{thm:QFT-dispersing}
Let $G$ be a group with irreps $\hat{G}$ and $d_\lambda$ denoting the
dimension of irrep $\lambda$.  Then the Fourier transform over
$G$ is $(\alpha,1/\sqrt{2})$-pseudo-dispersing, where
$\alpha=(\log\sum_\lambda d_\lambda)/\log|G| \geq 1/2$.
\end{theorem}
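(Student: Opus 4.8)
The plan is to read the required set $A$ and the vectors $\ket{\psi}$ directly off the structure of the Fourier transform, and then to show by a one‑line averaging argument that a good $\ket{\psi}$ exists for every $a\in A$. Fix a unitary realization $\rho_\lambda$ of each irrep $\lambda\in\hat G$, so that $U\ket{g}=\sum_{\lambda}\sqrt{d_\lambda/|G|}\sum_{i,j=1}^{d_\lambda}[\rho_\lambda(g)]_{ij}\ket{\lambda,i,j}$, and the Fourier basis $\{\ket{\lambda,i,j}\}$ has $\sum_\lambda d_\lambda^2=|G|=2^n$ elements. I take $A$ to be the set of pairs $(\lambda,i)$ with $1\le i\le d_\lambda$, so $|A|=\sum_\lambda d_\lambda$; the elementary inequality $\sum_\lambda d_\lambda\ge(\sum_\lambda d_\lambda^2)^{1/2}=|G|^{1/2}=2^{n/2}$ both forces $\alpha=(\log\sum_\lambda d_\lambda)/\log|G|\ge 1/2$ and gives enough room to record each $a=(\lambda,i)$ in the first $m$ of the $n$ bits (choosing a labelling of the Fourier basis in which $(\lambda,i)$ occupies the first $m$ bits and the column index $j$ the remaining $n-m$, adjoining $O(\log\max_\lambda d_\lambda)$ ancilla qubits if needed to make the two registers fit simultaneously).

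For $a=(\lambda,i)$ I let $\ket{\psi}$ range over unit vectors of the $(n-m)$‑qubit register supported on the column labels of the $\lambda$‑block, i.e.\ over $\bbC^{d_\lambda}$. Then $\bra{a}\bra{\psi}U\ket{g}=\sqrt{d_\lambda/|G|}\,\braket{\psi}{w_g}$, where $w_g\in\bbC^{d_\lambda}$ is (up to conjugation) the $i$‑th row of $\rho_\lambda(g)$, and crucially $\|w_g\|=1$ because $\rho_\lambda(g)$ is unitary. The heart of the proof is the claim that $\max_{\ket\psi}\sum_{g\in G}|\braket{\psi}{w_g}|\ge |G|/\sqrt{2d_\lambda}$. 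Since a maximum is at least an average, I would bound $\E_{\psi}\sum_g|\braket{\psi}{w_g}|=\sum_g\|w_g\|\,c_{d_\lambda}=|G|\,c_{d_\lambda}$, where $\ket\psi$ is Haar‑random on $\bbC^{d_\lambda}$ and $c_d=\E_\psi|\braket{\psi}{e}|$ for any fixed unit vector $\ket e$ (the value of $c_d$ is independent of $\ket e$ by rotation invariance, and independent of $g$ because $\|w_g\|=1$).

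The only genuine inequality needed is $c_d\ge 1/\sqrt{2d}$, and here one must resist Jensen's inequality, which gives merely the \emph{upper} bound $c_d\le 1/\sqrt d$. Instead, writing $t=|\braket{\psi}{e}|^2$, I would use $\E t=1/d$ (by symmetry) and $\E t^2\le 2/d^2$ (the second moment of the $\mathrm{Beta}(1,d-1)$ law of $t$) together with the log‑convexity of $p\mapsto\log\E t^p$, i.e.\ $\E t\le(\E\sqrt t)^{2/3}(\E t^2)^{1/3}$, to obtain $c_d=\E\sqrt t\ge(\E t)^{3/2}/(\E t^2)^{1/2}\ge 1/\sqrt{2d}$. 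This is the step I expect to be the main obstacle; everything else is a definition‑chase or a standard orthogonality computation.

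Assembling the pieces: for each $a=(\lambda,i)\in A$ there is a unit vector $\ket\psi\in\bbC^{2^{n-m}}$ with
$\sum_g|\bra{a}\bra{\psi}U\ket{g}|=\sqrt{d_\lambda/|G|}\sum_g|\braket{\psi}{w_g}|\ge\sqrt{d_\lambda/|G|}\cdot|G|/\sqrt{2d_\lambda}=\sqrt{|G|/2}=\tfrac{1}{\sqrt2}\,2^{n/2}$,
which is exactly \eq{ps-disp-def} with $\beta=1/\sqrt2$; and $|A|=\sum_\lambda d_\lambda=2^{\alpha n}$ with $\alpha=(\log\sum_\lambda d_\lambda)/\log|G|\ge 1/2$. (As a sanity check, for abelian $G$ one has $d_\lambda=1$ and $c_1=1$, recovering the $(1,1)$‑dispersing property.) A secondary, purely bookkeeping nuisance is the choice of bit‑encoding that simultaneously packs the label $(\lambda,i)$ into $m$ qubits and the column index into the other $n-m$; this is harmless up to a small number of ancillas, and does not affect the exponents.
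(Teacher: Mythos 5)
Your proof is correct, and its skeleton coincides with the paper's: the same set $A=\{(\lambda,i)\}$ with $|A|=\sum_\lambda d_\lambda\geq\sqrt{|G|}$, a Haar-random unit vector $\ket{\psi}$ over a $d_\lambda$-dimensional space, the second-vs-fourth moment inequality $\E|Y|\geq(\E Y^2)^{3/2}/(\E Y^4)^{1/2}$ (your H\"older/Lyapunov step is exactly the paper's \eq{fourth-moment}), and existence of a good $\ket{\psi}$ by ``max $\geq$ mean,'' yielding the same $\beta=1/\sqrt{2}$. Where you genuinely diverge is in how the moments are obtained. The paper works on the group side with $\rho=U^\dag(\proj{\lambda}\ot\proj{i}\ot I_{V_\lambda}/d_\lambda)U$, gets the second moment $\bra{g}\rho\ket{g}=1/|G|$ from invariance of $\rho$ under right multiplication, and bounds the fourth moment via the symmetric-subspace identity $\E\proj{\psi}^{\ot 2}=(I+\textsc{swap})\Pi^{\ot 2}/d_\lambda(d_\lambda+1)$ (Schur's lemma). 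You instead work on the Fourier side with the explicit matrix elements, observe that $\bra{a}\bra{\psi}U\ket{g}=\sqrt{d_\lambda/|G|}\,\braket{\psi}{w_g}$ with $\|w_g\|=1$ because rows of $\rho_\lambda(g)$ are unit vectors, and then only need the elementary moments $\E t=1/d$, $\E t^2=2/(d(d+1))\leq 2/d^2$ of $t=|\braket{\psi}{e}|^2$ for a Haar-random unit vector; this recovers the same bounds $1/|G|$ and $2/|G|^2$ without the representation-theoretic projector computation. Your route is thus more elementary and explicitly basis-dependent (it leans on the concrete QFT matrix-element formula and unitarity of the irrep matrices), whereas the paper's is basis-free and would apply verbatim to any block structure with the right invariance; the bookkeeping about packing $(\lambda,i)$ versus the column index into qubit registers is glossed over equally in both arguments and is harmless.
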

Via \thmref{lin-sep} and \thmref{superpoly-sep}, this implies that any
QFT can be used to obtain a superpolynomial quantum speedup.  For most
nonabelian QFTs, this is the first example of a problem which they can
solve more quickly than a classical computer.

\begin{proof}[Proof of \thmref{QFT-dispersing}]
Let $A=\{(\lambda,i): \lambda \in \hat{G}, i \in \{1,\ldots,d_\lambda\}\}$.

Let $V_\lambda$ denote the representation space corresponding to an
irrep $\lambda\in\hat{G}$.  The Fourier transform on $G$ maps vectors
in $\bbC[G]$ to superpositions of vectors of the form
$\ket{\lambda}\ket{v_1}\ket{v_2}$ for $\ket{v_1},\ket{v_2}\in
V_\lambda$.

Fix a particular choice of $\lambda$ and $\ket{i}\in V_\lambda$.
If $U$ denotes the QFT on $G$ then let
$$\rho = U^\dag\l(\proj{\lambda}\ot \proj{i} \ot 
\frac{I_{V_\lambda}}{d_\lambda}\r) U.$$ Define $V := \supp \rho$, and
let $\E_{\ket{\psi}\in V}$ denote an expectation over $\ket{\psi}$
chosen uniformly at random from unit vectors in $V$\footnote{We can
  think of $\ket{\psi}$ either as the result of applying a Haar
  uniform unitary to a fixed unit vector, or by choosing $\ket{\psi'}$
  from any rotationally invariant ensemble (e.g. choosing the real and
  imaginary part of each component to be an i.i.d. Gaussian with mean
  zero) and setting
  $\ket{\psi}=\ket{\psi'}/\sqrt{\braket{\psi'}{\psi'}}$.}  Finally,
let $\Pi$ be the projector onto $V$.  Note that $\rho =
\Pi/d_\lambda = \E \proj{\psi}$.

Because of the invariance of $\rho$ under right-multiplication by group
elements (i.e.\ $\bra{g_1}\rho\ket{g_2}=\bra{g_1h}\rho\ket{g_2h}$ for
all $g_1,g_2,h\in G$), 
we have for any $g$ that
\be  \bra{g}\rho\ket{g} = \frac{1}{|G|} \sum_h \bra{gh}\rho\ket{gh}
= \frac{1}{|G|} \tr(\rho) = \frac{1}{|G|}.
\label{eq:rho-isotropic}\ee
Since $\E\proj{\psi}=\rho$, \eq{rho-isotropic} implies that 
$$\E_{\ket{\psi}\in V}
|\braket{g}{\psi}|^2 = \bra{g}\rho\ket{g} = \frac{1}{|G|}.$$

Next, we would like to analyze $\E |\braket{g}{\psi}|^4$.
\ba \E_\ket{\psi} |\braket{g}{\psi}|^4 & = 
\E_\ket{\psi} \tr \l(\proj{g}\ot\proj{g}\r) \cdot
\l(\proj{\psi}\ot\proj{\psi}\r) \\
& = \tr \l(\proj{g}\ot\proj{g}\r)
\frac{I + \textsc{swap}}{d_\lambda(d_\lambda+1)}
\l(\Pi \ot \Pi\r)\label{eq:sym-subspace}\\
&\leq \tr \l(\proj{g}\ot\proj{g}\r)
\cdot (I + \textsc{swap}) (\rho \ot \rho)\\
& = 2(\bra{g}\rho\ket{g})^2 = \frac{2}{|G|^2}
\ea
To prove the equality on the second line, we use a
standard representation-theoretic trick (cf. section V.B of \cite{PSW05}).
First note that $\ket{\psi}^{\ot 2}$ belongs to the symmetric subspace
of $V\ot V$, which is a $\frac{d_\lambda(d_\lambda+1)}{2}$-dimensional
irrep of $\cU_{d_\lambda}$.  Since $\E_{\ket{\psi}}\proj{\psi}^{\ot 2}$
is invariant under conjugation by $u \ot u$ for any $u\in
\cU_{d_\lambda}$, it follows
that $\E_\ket{\psi} \proj{\psi}^{\ot 2}$ is proportional to a projector onto
the symmetric subspace of $V^{\ot 2}$.  Finally,
$\textsc{swap}\Pi^{\ot 2}$ has
eigenvalue $1$ on the symmetric subspace  of $V^{\ot 2}$ and
eigenvalue $-1$ on its orthogonal complement, the antisymmetric
subspace  of $V^{\ot 2}$.  Thus, $\frac{I+\textsc{swap}}{2}\Pi^{\ot
  2}$ projects onto 
the symmetric subspace and we conclude that 
$$\E_\ket{\psi} \proj{\psi}^{\ot 2} = 
\frac{(I+\textsc{swap})(\Pi \ot \Pi)}{d_\lambda(d_\lambda+1)}.$$


Now we note the inequality
\be \E |Y| \geq
(\E Y^2)^{\frac{3}{2}} / (\E Y^4)^{\frac{1}{2}},
\label{eq:fourth-moment}\ee
which holds for any random variable $Y$ and can be proved using
H\"{o}lder's inequality~\cite{SICOMP::Berger1997}.  Setting
$Y=|\braket{g}{\psi}|$, we can bound $\E_{\ket{\psi}}
|\braket{g}{\psi}| \geq 1/\sqrt{2|G|}$.  Summing over $G$, we find
$$\E_\ket{\psi} \sum_{g\in G}|\braket{g}{\psi}| \geq \frac{1}{\sqrt{2}}
\sqrt{|G|}.$$
Finally, because this last inequality holds in expectation, it must
also hold for at least some choice of $\ket{\psi}$.  Thus there exists
$\ket{\psi}\in V$ such that 
$$\sum_{g\in G}|\braket{g}{\psi}| \geq \frac{1}{\sqrt{2}}
\sqrt{|G|}.$$
Then $U$ satisfies the pseudo-dispersing condition in \eq{ps-disp-def}
for the state $\ket{\psi}$ with $\beta=1/\sqrt{2}$.

This construction works for each $\lambda\in\hat{G}$ and for $\ket{v_1}$
running over any choice of basis of $V_\lambda$.  Together, this
comprises $\sum_{\lambda\in\hat{G}} d_\lambda$ vectors in the set $A$.
\end{proof}

\section{Most circuits are dispersing}
\label{sec:random-circuits}

Our final, and most general, method of constructing dispersing
circuits is simply to choose a polynomial-size random circuit.  We
define a length-$t$ random circuit to consist of performing the
following steps $t$ times.
\begin{enumerate}
\item Choose two distinct qubits $i,j$ at random from $[n]$.
\item Choose a Haar-distributed random $U\in\cU_4$.
\item Apply $U$ to qubits $i$ and $j$.
\end{enumerate}
A similar model of random circuits was considered in \cite{DOP07}.
Our main result about these random circuits is the following Theorem.

\begin{theorem}\label{thm:random-dispersing}
For any $\alpha,\beta>0$, there exists a constant $C$ such that if
$U$ is a random circuit on $n$ qubits of length $t=Cn^3$ then $U$ is
$(\alpha,\beta)$-dispersing with probability
$$\geq 1 - \frac{2\beta^2 }{1-2^{-n(1-\alpha)}}.$$
\end{theorem}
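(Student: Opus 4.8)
The plan is to reduce the $(\alpha,\beta)$-dispersing property, row by row, to an upper bound on the inverse participation ratio $\sum_x|\langle a|U|x\rangle|^4$, show that a random circuit of length $Cn^3$ makes this quantity small in expectation, and then extract the stated probability from two applications of Markov's inequality. To carry out the reduction, fix $a\in\{0,1\}^n$ and put $r_x=\langle a|U|x\rangle$, so that $\sum_x|r_x|^2=1$ (unitarity) and the left side of \eq{disp-def} is $\sum_x|r_x|$. Applying the H\"older-type inequality \eq{fourth-moment} (already used in the proof of \thmref{QFT-dispersing}) to $Y=|r_x|$ with $x$ uniform on $\{0,1\}^n$, using $\E Y^2=2^{-n}$ and $\E Y^4=2^{-n}\sum_x|r_x|^4$, gives
\be
\sum_{x\in\{0,1\}^n}|\langle a|U|x\rangle| \geq \Big(\sum_{x\in\{0,1\}^n} |\langle a|U|x\rangle|^4\Big)^{-1/2}.
\ee
Call $a$ \emph{good} when $\sum_x|\langle a|U|x\rangle|^4\leq\beta^{-2}2^{-n}$; every good $a$ then satisfies \eq{disp-def}, so any $2^{\alpha n}$ good values of $a$ form a set $A$ witnessing that $U$ is $(\alpha,\beta)$-dispersing.

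Now suppose --- this is the only substantial ingredient --- that for every fixed $a$ one has $\E_U\sum_x|\langle a|U|x\rangle|^4\leq 2\cdot 2^{-n}$. Markov's inequality gives $\Pr_U[a\text{ not good}]\leq(2\cdot 2^{-n})/(\beta^{-2}2^{-n})=2\beta^2$, hence $\E_U\#\{a:\ a\text{ not good}\}\leq 2\beta^2\,2^n$. A second application of Markov's inequality gives $\Pr_U[\#\{a\text{ not good}\}>2^n-2^{\alpha n}]\leq 2\beta^2 2^n/(2^n-2^{\alpha n})=2\beta^2/(1-2^{-(1-\alpha)n})$, and whenever this event fails there remain $\geq 2^{\alpha n}$ good $a$, so $U$ is $(\alpha,\beta)$-dispersing. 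This is exactly the bound claimed in the theorem, so everything comes down to the second-moment estimate.

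That estimate is the technical heart. Because the Haar measure on $\cU_4$ is inverse-invariant, $U^\dag$ is distributed as a length-$t$ random circuit $W$; writing $\Delta=\sum_x\proj{x}\ot\proj{x}$ and using $|\langle a|U|x\rangle|=|\langle x|U^\dag|a\rangle|$, we get $\E_U\sum_x|\langle a|U|x\rangle|^4=\E_W\tr[\Delta\,(W\ot W)(\proj{a}\ot\proj{a})(W^\dag\ot W^\dag)]=\tr[\Delta\,\mathcal{M}_t(\proj{a}\ot\proj{a})]$, where $\mathcal{M}_t$ is the two-copy superoperator of the random circuit --- the $t$-fold iterate of the single-gate twirl $\rho\mapsto\E_{i,j}\E_{V\in\cU_4}(V_{ij}\ot V_{ij})\,\rho\,(V_{ij}^\dag\ot V_{ij}^\dag)$, with $V_{ij}$ denoting $V$ acting on qubits $i,j$. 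This is a doubly stochastic map on the two-copy space whose infinite-iteration limit $\mathcal{M}_\infty$ sends every symmetric-subspace density operator to the normalized symmetric-subspace projector $(I+\textsc{swap})/(2^n(2^n+1))$, exactly as computed in the proof of \thmref{QFT-dispersing}; since $\proj{a}\ot\proj{a}$ lies in the symmetric subspace, $\tr[\Delta\,\mathcal{M}_\infty(\proj{a}\ot\proj{a})]=2/(2^n+1)<2\cdot 2^{-n}$, leaving slack $\Theta(2^{-2n})$. It therefore suffices to show that after $t=Cn^3$ gates $\mathcal{M}_t$ agrees with $\mathcal{M}_\infty$ up to additive error $\lesssim 2^{-2n}$ on the relevant matrix elements --- equivalently, that $\poly(n)$ random two-qubit gates form an approximate unitary $2$-design with inverse-exponentially small error. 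This is the main obstacle: I would establish it from a spectral-gap / mixing-time analysis of the single-gate twirl in the spirit of \cite{DOP07} (the relevant gap is $\Omega(1/\poly(n))$, so $\poly(n)$ gates already suffice and the choice $t=Cn^3$ is comfortably wasteful), deferring the full mixing bound to the extended version. Everything outside this step is a short calculation.
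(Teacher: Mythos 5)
Your outer argument is exactly the paper's: the H\"older-type inequality \eq{fourth-moment} reduces \eq{disp-def} for a fixed $a$ to the fourth-moment bound $\sum_x|\bra{a}U\ket{x}|^4\leq \beta^{-2}2^{-n}$, and your two applications of Markov's inequality (first over the circuit for fixed $a$, then over the number of bad $a$'s) reproduce the stated probability $1-2\beta^2/(1-2^{-(1-\alpha)n})$ verbatim. That reduction is correct and is the same route the paper takes.

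The genuine gap is the step you yourself call ``the only substantial ingredient'' and then defer: the estimate $\E_U\sum_x|\bra{a}U\ket{x}|^4\leq 2\cdot 2^{-n}$ for circuits of length $Cn^3$. This is not a citable black box here --- it is the actual content of the paper's proof. The paper establishes it by expanding $U^\dag\proj{a}U$ in the Pauli basis, proving (via the representation-theoretic identity $\E_W \ad_W^{\ot 2}=\proj{00}+\proj{\xi}$ for a Haar two-qubit gate) that the squared Pauli coefficients $\E\gamma_t^2(p)$ evolve as an explicit classical Markov chain on $\{0,1,2,3\}^n$, identifying its stationary distribution, and bounding its spectral gap by $\Omega(1/n^2)$ with the Diaconis--Saloff-Coste comparison method, so that $t=Cn^3$ gives $\E\gamma_t^2(p)\leq 4^{-n}$ for all $p\neq 0^n$ and hence $\E_U\sum_x|\bra{a}U\ket{x}|^4=\sum_{p\in\{0,1\}^n}\E\gamma_t^2(p)\leq 2\cdot 2^{-n}$. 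Your sketch instead asserts that the two-copy channel $\mathcal{M}_t$ converges to the Haar twirl with a gap ``$\Omega(1/\poly(n))$ in the spirit of \cite{DOP07}'' and defers the mixing bound. Two things there require proof rather than assertion: (i) that the fixed-point space of the random-circuit two-copy twirl is exactly the span of $I$ and \textsc{swap} (equivalently, irreducibility of the chain away from $0^n$), which is what licenses your formula for $\mathcal{M}_\infty(\proj{a}\ot\proj{a})$ --- the computation in \thmref{QFT-dispersing} is for a full Haar average and does not by itself apply to the circuit ensemble; and (ii) a quantitative gap, since, as you correctly note, the available slack is only $\Theta(2^{-2n})$, so a generic ``approximate 2-design with $1/\poly$ error'' statement is useless and you need exponential convergence, i.e.\ a gap bound of the form $\Omega(1/n^2)$ combined with $t=Cn^3$ and control of the exponentially large initial distance. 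Until those are supplied, the theorem's main technical claim remains unproved.
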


\thmref{random-dispersing} is proved in the extended version of this paper\cite{HH08}.
The idea of the 
proof is to reduce the evolution of the fourth moments of the random
circuit (i.e. quantities of the form $\E_U \tr UM_1U^\dag
M_2UM_3U^\dag M_4$) to a classical Markov chain, using the approach of
\cite{DOP07}.  Then we show that this Markov chain has a gap of
$\Omega(1/n^2)$, so that circuits of length $O(n^3)$ have fourth
moments nearly identical to those of Haar-uniform unitaries from
$\cU_{2^n}$.  Finally, we use \eq{fourth-moment}, just as we did for
quantum Fourier transforms, to show that a large fraction of inputs
are likely to be 
mapped to states with large $L_1$-norm.  This will prove
\thmref{random-dispersing} and show that superpolynomial quantum
speedups can be built by plugging almost any circuit into the
recursive framework we describe in \secref{recur}.

\section*{Acknowledgments}
AWH was funded by the U.S. Army Research Office under grant
W9111NF-05-1-0294, the European Commission under Marie Curie grants
ASTQIT (FP6-022194) and QAP (IST-2005-15848), and the U.K. Engineering
and Physical Science Research Council through ``QIP IRC.''

\bibliographystyle{alpha}
\bibliography{rfs}

\newcommand{\etalchar}[1]{$^{#1}$}
\begin{thebibliography}{HMR{\etalchar{+}}06}

\bibitem[Aar03]{Aaronson2003}
Scott Aaronson.
\newblock Quantum lower bound for recursive {F}ourier sampling.
\newblock {\em Quantum Information and Computation}, 3(2):165--174, 2003.
\newblock arXiv:quant-ph/0209060.

\bibitem[AJL06]{AJL06}
Dorit Aharonov, Vaughan Jones, and Zeph Landau.
\newblock A polynomial quantum algorithm for approximating the jones
  polynomial.
\newblock In {\em STOC '06: Proceedings of the thirty-eighth annual ACM
  symposium on Theory of computing}, pages 427--436, New York, NY, USA, 2006.
  ACM Press.
\newblock arXiv:quant-ph/0511096.

\bibitem[BCvD05]{BaconCvD05}
Dave Bacon, Andrew~M. Childs, and Wim van Dam.
\newblock From optimal measurement to efficient quantum algorithms for the
  hidden subgroup problem over semidirect product groups.
\newblock In {\em FOCS '05: 46th Annual IEEE Symposium on Foundations of
  Computer Science}, pages 469--478, 2005.
\newblock arXiv:quant-ph/0504083.

\bibitem[Bea97]{STOC::Beals1997}
Robert Beals.
\newblock Quantum computation of {Fourier} transforms over symmetric groups.
\newblock In {\em STOC '97: Proceedings of the Twenty-Ninth Annual ACM
  Symposium on Theory of Computing}, pages 48--53, El Paso, Texas, 4--6~May
  1997. ACM Press.

\bibitem[Ber97]{SICOMP::Berger1997}
Bonnie Berger.
\newblock The fourth moment method.
\newblock {\em Siam J. Comp.}, 26(4):1188--1207, August 1997.

\bibitem[BV97]{SICOMP::BernsteinV1997}
Ethan Bernstein and Umesh Vazirani.
\newblock Quantum complexity theory.
\newblock {\em Siam J. Comp.}, 26(5):1411--1473, Oct 1997.

\bibitem[DOP07]{DOP07}
Oscar~C.O. Dahlsten, Roberto Oliveira, and Martin~B. Plenio.
\newblock Emergence of typical entanglement in two-party random processes.
\newblock {\em J. Phys. A}, 40:8081--8108, 2007.
\newblock arXiv:quant-ph/0701125.

\bibitem[DS96]{DS96}
P.~{Diaconis} and L.~{Saloff-Coste}.
\newblock {Logarithmic Sobolev inequalities for finite Markov chains}.
\newblock {\em Ann. Appl. Probab.}, 6(3):695--750, 1996.

\bibitem[FIM{\etalchar{+}}03]{FriedlIMSS03}
Katalin Friedl, Gabor Ivanyos, Frederic Magniez, Miklos Santha, and Pranab Sen.
\newblock Hidden translation and orbit coset in quantum computing.
\newblock In {\em STOC '03: Proceedings of the Thirty-Fifth Annual ACM
  Symposium on Theory of Computing}, pages 1--9, San Diego, CA, 2003. ACM
  Press.

\bibitem[FvdG99]{FG97}
Chris~A. Fuchs and Jeroen van~de Graaf.
\newblock Cryptographic distinguishability measures for quantum mechanical
  states.
\newblock {\em IEEE Trans. Inf. Th.}, 45(4):1216--1227, 1999.
\newblock arXiv:quant-ph/9712042.

\bibitem[Hal07]{Hallgren2007}
Sean Hallgren.
\newblock Polynomial-time quantum algorithms for {P}ell's equation and the
  principal ideal problem.
\newblock {\em Journal of the ACM}, 54(1):1--19, 2007.

\bibitem[HL08]{HL08}
Aram~W. Harrow and Richard Low.
\newblock Random circuits are approximate 2-designs.
\newblock arXiv:0802.1919, 2008.

\bibitem[HMR{\etalchar{+}}06]{HMRRS06}
Sean Hallgren, Cristopher Moore, Martin R\"{o}tteler, Alexander Russell, and
  Pranab Sen.
\newblock Limitations of quantum coset states for graph isomorphism.
\newblock In {\em STOC '06: Proceedings of the 38th Annual ACM Symposium on
  Theory of Computing}, pages 604--617, New York, NY, USA, 2006. ACM Press.
\newblock arXiv:quant-ph/0511148, arXiv:quant-ph/0612089.

\bibitem[IMS01]{IMSantha01}
G\'{a}bor Ivanyos, Fr\'{e}d\'{e}ric Magniez, and Miklos Santha.
\newblock Efficient quantum algorithms for some instances of the non-abelian
  hidden subgroup problem.
\newblock In {\em Proceedings of the Thirteenth Annual ACM Symposium on
  Parallel Algorithms and Architectures}, pages 263--270, Heraklion, Crete
  Island, Greece, 2001.

\bibitem[PSW06]{PSW05}
Sandu Popescu, Anthony~J. Short, and Andreas Winter.
\newblock The foundations of statistical mechanics from entanglement:
  Individual states vs. averages.
\newblock {\em Nature}, 2:754--758, 2006.
\newblock arXiv:quant-ph/0511225.

\bibitem[Sho97]{SICOMP::Shor1997}
Peter~W. Shor.
\newblock Polynomial-time algorithms for prime factorization and discrete
  logarithms on a quantum computer.
\newblock {\em Siam J. Comp.}, 26(5):1484--1509, October 1997.
\newblock arXiv:quant-ph/9508027.

\bibitem[Sim97]{SICOMP::Simon1997:1474}
Daniel~R. Simon.
\newblock On the power of quantum computation.
\newblock {\em Siam J. Comp.}, 26(5):1474--1483, October 1997.

\bibitem[vDHI03]{vDHI03}
Wim van Dam, Sean Hallgren, and Lawrence Ip.
\newblock Quantum algorithms for some hidden shift problems.
\newblock In {\em SODA '03: Proceedings of the Fourteenth Annual ACM-SIAM
  Symposium on Discrete Algorithms}, Baltimore, MD, 2003.

\bibitem[Wat01]{Watrous01}
John Watrous.
\newblock Quantum algorithms for solvable groups.
\newblock In {\em STOC '01: Proceedings of the Thirty-Third Annual ACM
  Symposium on Theory of Computing}, pages 60--67, Crete, Greece, 2001. ACM
  Press.
\newblock arXiv:quant-ph/0011023.

\end{thebibliography}

\appendix
\section{Most circuits are dispersing}
\label{app:dispersing}

In this Appendix we prove \thmref{random-dispersing}.

Suppose we start in a computational basis state $\ket{a}$, and after
$t$ steps of a random circuit (described in \secref{random-circuits}),
we have the state $\ket{\psi_t}$.  Let $U^\dag$ denote the 
circuit we have applied, and let $\psi_t$ denote
$\proj{\psi_t}$, so that $\psi_t = U^\dag \proj{a} U$.  For
$p\in\{0,1,2,3\}^n$ let $\sigma_p$ denote the 
tensor product of Pauli matrices $\sigma_{p_1}\otimes\cdots\otimes
\sigma_{p_n}$, where 
$\{\sigma_0, \sigma_1, \sigma_2, \sigma_3\}$ are the usual
single-qubit Pauli matrices $\{I,\sigma_z,\sigma_x,\sigma_y\}$.  Then we
can expand $\psi_t$ in the Pauli 
basis (following \cite{DOP07}) as
$$\psi_t = 2^{-\frac{n}{2}} \sum_p \gamma_t(p) \sigma_p,$$
where $\gamma_t(p) = 2^{-\frac{n}{2}} \tr \psi_t \sigma_p$. The
advantage of this approach is that each $\gamma_t(p)$ is real and
$\sum_p \gamma_t^2(p)=1$, so we can think of $\{\gamma_t^2(p)\}$ as a
probability distribution on $p\in\{0,1,2,3\}^n$.

Indeed, by an argument similar to that in \cite{DOP07}, we can show
how $\{\E_\circ \gamma_t^2(p)\}$ evolves in a 
way that can be described as a Markov chain on $\{0,1,2,3\}^n$.  (Here
the expectation is taken over the choice of random quantum circuit.) 

\begin{lemma}\label{lem:markov}
  Random quantum circuits are such that $\{\E_\circ \gamma_t^2(p)\}$
  evolve according to the following Markov chain on
  $\{0,1,2,3\}^n$:
  \bit
\item Select $i\neq j$ randomly from $[n]$.
\item If $p_i=p_j=0$ then do nothing.
\item Otherwise set $(p_i,p_j)$ to a random element of
  $\{0,1,2,3\}^2\backslash \{(0,0)\}$.
  \eit
Furthermore, this Markov chain satisfies the
  following properties.
\begin{enumerate}
 \item This Markov chain is irreducible and ergodic once we delete the
isolated vertex $0^n$.
\item Its stationary distribution (when starting with any physical
  state) has $\gamma^2(0^n)=2^{-n}$, but
  otherwise is uniform on $p\neq 0^n$,
  i.e.\ $\gamma^2(p)=4^{-n}/(1+2^{-n})$ for all $p\in\{0,1,2,3\}^n$.
\item Its spectral gap is $\geq \Omega(1/n^2)$.
\item There exists a constant $C$ such that if $t\geq Cn^3$ and if the
  initial state is a computational basis state then  $\E_\circ \gamma_t^2(p)
  \leq 4^{-n}$ for all $p\neq 0^n$.
\end{enumerate}
\end{lemma}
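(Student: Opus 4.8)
The lemma has two halves --- that the $\E_\circ\gamma_t^2$ evolve by the stated chain, and then the four properties of that chain --- and I would prove them in that order. For the first half I would follow the second-moment computation of \cite{DOP07}. One step of the random circuit replaces $\psi_t$ by $g\psi_t g^\dag$ with $g$ Haar on $\cU_4$ acting on a uniformly random pair $i\neq j$ (the direction of conjugation is irrelevant, since Haar measure is inversion-invariant). In the Pauli basis, conjugation by $g$ is a real orthogonal matrix $O^{(g)}$ that fixes the identity direction and, on the $15$-dimensional traceless subspace, ranges over the adjoint representation of $SU(4)$; and since $\sigma_p$ factors as (a Pauli on qubits $i,j$)$\,\otimes\,$(an unaffected Pauli on the rest), $\gamma_{t+1}$ restricted to any ``column'' --- the coordinates of $p$ outside $\{i,j\}$ held fixed --- is $O^{(g)}$ applied to that column of $\gamma_t$. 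Squaring and averaging over $g$, the only input I need is $\E_g\,O^{(g)}_{q,r}O^{(g)}_{q,r'}=\delta_{r,r'}/15$ for nonidentity $q,r,r'$, which is Schur's lemma for the irreducible adjoint representation. This gives $\E_g\gamma_{t+1}^2(p)=\gamma_t^2(p)$ when $p_i=p_j=0$ and otherwise spreads the column's mass $\sum_{r\neq 00}\gamma_t^2(\cdot,r)$ uniformly over the $15$ nonidentity values of $(p_i,p_j)$ --- exactly the stated rule for a fixed pair. Averaging over $(i,j)$ gives the chain, and since this update is linear in $\gamma_t^2$ it commutes with $\E_\circ$ over the earlier gates.

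\textbf{Irreducibility, ergodicity, stationarity (properties 1--2).} These are routine and I would dispatch them quickly. The vertex $0^n$ is isolated because every move on it has $p_i=p_j=0$; on the nonzero Paulis, irreducibility follows by repeatedly choosing a pair that touches a nonzero coordinate, and aperiodicity because re-selecting the current configuration on the chosen pair always has probability $\geq\frac1{15}$. Next, $\gamma_t^2(0^n)=(2^{-n/2}\tr\psi_t)^2=2^{-n}$ identically, while the measure spreading the remaining mass $1-2^{-n}$ uniformly over the $4^n-1$ nonzero Paulis is plainly invariant (the uniform-on-$15$ redistribution preserves uniformity within each column) and, by irreducibility, unique; this gives $\gamma^2(p)=\frac{4^{-n}}{1+2^{-n}}$ for $p\neq 0^n$. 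I would also note here, for use in Step~4, that each single-pair kernel is reversible with respect to this measure and idempotent --- applying it twice to any $p$ reproduces its output distribution --- hence an orthogonal projection on $L^2$, so the chain $P$ (their average) has spectrum in $[0,1]$.

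\textbf{The spectral gap (property 3) --- the main obstacle.} Here I would use a Diaconis--Saloff-Coste comparison of $P$, restricted to the nonzero Paulis, against the product chain $\tilde P$ that resamples the two Paulis on a uniformly random pair from the full uniform measure on $\{0,1,2,3\}^2$. For $\tilde P$ a one-line character computation gives $\mathrm{gap}(\tilde P)=2/n$ exactly (the character supported on a single coordinate has eigenvalue $\binom{n-1}{2}/\binom{n}{2}=1-2/n$), and restricting it to the nonzero Paulis, rejecting moves into $0^n$, perturbs this only by a $1+O(2^{-n})$ factor because $0^n$ carries exponentially little weight. The only $\tilde P$-transitions not already available as $P$-moves are those that create or destroy the all-identity configuration on the chosen pair; each of these I would realize by a path of at most two $P$-moves that go via a third coordinate which is currently nonzero (such a coordinate exists away from $0^n$). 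The congestion of this routing is $O(n)$ --- a given $P$-move ``act on coordinates $j,k$'' is reused when simulating the forbidden transition on $\{i,j\}$ for each of the $O(n)$ choices of $i$ --- so $\mathrm{gap}(P)\geq\Omega(1/n)\cdot\mathrm{gap}(\tilde P)=\Omega(1/n^2)$. Getting this congestion estimate right is the real work; the factor-$n$ loss is also why one states $\Omega(1/n^2)$ rather than the presumably-true $\Theta(1/n)$, which is harmless (it would only replace $Cn^3$ by $Cn^2$). The slowdown is morally because $P$, viewed on support patterns, is ``stuck'' at small Hamming weight: a weight-one Pauli can change only when its unique nonzero coordinate lands in the chosen pair, an event of probability $\Theta(1/n)$ per step.

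\textbf{Mixing below $4^{-n}$ (property 4).} A computational basis state $\proj a$ has Pauli coefficients supported on $\{0,3\}^n$, all of magnitude $2^{-n/2}$, so the initial distribution $\nu_0=\gamma_0^2$ is uniform on those $2^n$ Paulis; restricted and renormalized to the nonzero Paulis, its $\chi^2$-divergence from the (uniform) stationary law $\pi$ is $\|\nu_0/\pi-1\|_{2,\pi}^2=O(2^n)$. Using positive semidefiniteness and the gap $g=\Omega(1/n^2)$ from Steps 2--3, the standard $L^2$ contraction gives $\|\nu_t/\pi-1\|_{2,\pi}\leq(1-g)^t\|\nu_0/\pi-1\|_{2,\pi}$, and converting to a pointwise statement costs a further factor $\pi(p)^{-1/2}=2^{O(n)}$; hence $\E_\circ\gamma_t^2(p)/\pi(p)-1\leq 2^{O(n)}e^{-gt}$. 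Taking $t=Cn^3$ for a sufficiently large constant $C$ makes $gt$ exceed the $O(n)$ in the exponent by a large multiple, so this is $\leq 2^{-n}$, whence $\E_\circ\gamma_t^2(p)\leq\pi(p)(1+2^{-n})=4^{-n}$ for every $p\neq 0^n$, as required.
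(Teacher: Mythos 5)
Your proposal follows essentially the same route as the paper: the chain is derived from irreducibility of the 15-dimensional adjoint representation (you invoke Schur orthogonality of the matrix elements of the conjugation action, the paper equivalently computes the invariant subspace of $\ad_W^{\ot 2}$ via Clebsch--Gordan, getting $\proj{00}+\proj{\xi}$); the stationary distribution is checked the same way; the gap is obtained by a Diaconis--Saloff-Coste comparison against a "resample a random pair" benchmark; and property 4 is the same $L^2$ mixing-time argument (your observation that each single-pair kernel is a projection, so the averaged chain is positive semidefinite, is a nice explicit justification of the $(1-g)^t$ contraction that the paper leaves to the cited standard bounds).

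The one step I would not accept as written is the claim that restricting the product chain to the nonzero Paulis "perturbs the gap only by a $1+O(2^{-n})$ factor because $0^n$ carries exponentially little weight." Small stationary mass of the deleted vertex does not by itself control the gap of the restricted chain: in the variational argument one must extend a test function $f$ on $\{0,1,2,3\}^n\setminus\{0^n\}$ to $0^n$, and the extra Dirichlet terms at $0^n$ are of size $4^{-n}\max_y(\tilde f(0^n)-f(y))^2$, which need not be small relative to $\frac{2}{n}\mathrm{Var}(f)$ since $f$ may be exponentially large on the (exponentially light) low-weight Paulis adjacent to $0^n$; a crude patch (bounding such differences by short paths inside the restricted chain) costs polynomial factors and would degrade your final exponent. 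The conclusion is true, but it needs either a genuine argument or the paper's shortcut: take the benchmark to be the pair-resampling chain already conditioned to avoid $0^n$, whose gap $\geq 2/n$ is exactly what \cite[Thm 3.2]{DS96} supplies, and then run your canonical-path comparison (your $O(n)$ congestion estimate, with a canonical choice of the helper coordinate, is the right shape and matches the paper's appeal to \cite[Thm 3.3]{DS96}). With that substitution the rest of your argument goes through.
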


Before proving the Lemma, we show how it implies
\thmref{random-dispersing}.
Fix an
input $\ket{a}$, apply $t$ random two-qubit unitaries as described
above to yield the state $\ket{\psi_t}$, and define $Q_t := \sum_x
|\braket{x}{\psi_t}|^4$.  Expanding $\psi_t$ in 
terms of $\sigma_p$, we obtain
$$Q_t = 2^{-n}\sum_x \bra{x}\sum_{p\in\{0,1,2,3\}^n} \gamma_t(p)\sigma_p\ket{x}
\bra{x}\sum_{q\in\{0,1,2,3\}^n} \gamma_t(q)\sigma_q\ket{x}.$$
Now $\bra{x}\sigma_p\ket{x}$ will be zero if $p$ contains any 2's or
3's, since each of these lead to bit flips.  So we can restrict our
sum to $p$'s and $q$'s that are strings of 0's and 1's (corresponding
to $I$ and $\sigma_z$).  Moreover, if $p$ is such a string, then
$\bra{x}\sigma_p\ket{x} = (-1)^{p\cdot x}$.  Thus
$$Q_t = 2^{-n}\sum_{p\in\{0,1\}^n,q\in\{0,1\}^n}
\gamma_t(p)\gamma_t(q) \sum_x (-1)^{(p+q)\cdot x}
= \sum_{p\in\{0,1\}^n} \gamma_t^2(p).$$
To bound this sum, we use the last part of \lemref{markov} together
with $\gamma_t^2(0^n)=2^{-n}$ to find
$$\E_\circ Q_t \leq 2\cdot 2^{-n}.$$
Thus,
Markov's inequality implies that
$$\Pr_\circ\left(Q_t \geq \frac{2^{-n}}{\beta^2}\right) \leq 2\beta^2.$$

Now consider the event that $Q_t\leq 2^{-n}/\beta^2$.
We will
use this to lower-bound 
$\sum_x |\braket{x}{\psi_t}|$.   To do so, we define a random variable
$Y$ to equal $|\braket{x}{\psi_t}|$ for a uniformly
random choice of $x\in\{0,1\}^n$.  Then $\E_x Y^2 = 2^{-n}$, $\E_x Y^4
\leq \beta^{-2}\cdot 4^{-n}$, and by \eq{fourth-moment}, $\E_x Y = \E_x |Y| \geq
2^{-n/2}\beta$.  Thus $\sum_x |\braket{x}{\psi_t}| \geq \beta
2^{n/2}$.

Putting this together, we see that for any fixed input $\ket{a}$, and
for all but a $2\beta^2$-fraction of 
(sufficiently long) random circuits,
$$\sum_{x\in\{0,1\}^n}|\bra{x}U^\dag\ket{a}| \geq
\beta 2^{\frac{n}{2}}.$$
Say that a pair $(U,a)$ is {\em bad} when this does not hold.  So for
any $a$, the probability over $U\in \circ$ that $(U,a)$ is bad is $\leq 2\beta^2$.
Thus Markov's inequality implies that
$$\Pr_{U\in \circ}\left[\Pr_a [(U,a)\text{ is bad}] \geq 1-2^{-(1-\alpha)n}\right]
\leq \frac{2\beta^2}{1-2^{-(1-\alpha)n}}.$$
Turning this around, we conclude that a random circuit $U\in \circ$ is
$(\alpha,\beta)$-dispersing (meaning that $(U,a)$ is good for $\geq
2^{\alpha n}$ values of $a$) with probability $\geq 
1-2\beta^2/(1-2^{-(1-\alpha)n})$.   Thus, we can obtain a
quantum/classical separation 
from almost any quantum circuit with uniformly bounded parameters for
both the quantum upper bound and the classical lower bound.

Recent work\cite{HL08} has improved the analysis of the Markov chain to
show that the gap is $\Omega(1/n)$ and hence that circuits of length
$O(n^2)$ are generically dispersing.

{\em Proof of \lemref{markov}:} The reduction of the quantum random
circuit to a classical Markov chain is due to \cite{DOP07}, but we
will present an alternate, shorter proof in order to have a
self-contained presentation.

First, we show that $\gamma_t(p)^2$ follows a Markov chain.  Recall that
$\psi_t = 2^{-n/2}\sum_p \gamma_t(p) \sigma_p$ and let $W$ denote the
random unitary applied at time $t+1$.  Then $\psi_{t+1} = W\psi_t
W^\dag$.  Since $W$ acts only on two qubits, we can assume for the
purpose of this analysis that  $n=2$, so
$W\in\cU_4$.  Then
$$\gamma_{t+1}(p) = \frac{1}{2}\tr \sigma_p W \psi_t W^\dag
= \frac{1}{4}\sum_{q\in\{0,1,2,3\}^2}
 \gamma_t(q) \tr \sigma_p W \sigma_q W^\dag,$$
and
\ba \gamma_{t+1}^2(p) &= \frac{1}{16} \sum_{q,q'\in\{0,1,2,3\}^2}
 \gamma_t(q)\gamma_t(q')
\tr \sigma_p W \sigma_q W^\dag \tr \sigma_p W \sigma_{q'} W^\dag
\\ &= \sum_{q,q'\in\{0,1,2,3\}^2}\gamma_t(q)\gamma_t(q')
\bra{p,p}\ad_W^{\ot 2}\ket{q,q'}.\ea
Here $\ad_W$ is an operator on $\bbC^{16}$ defined by
$\bra{p}\ad_W\ket{q} := \tr\sigma_p W\sigma_q W^\dag / 4$ for
$p,q\in\{0,1,2,3\}^{2}$.
When we take the expectation over random choices of $W$, we obtain
\be \E \gamma_{t+1}^2(p) = 
\sum_{q,q'\in\{0,1,2,3\}^2}\gamma_t(q)\gamma_t(q')
\bra{p,p}\l(\E_W \ad_W^{\ot 2}\r)\ket{q,q'}. \ee
Define
\be\ket{\xi}=\frac{1}{\sqrt{15}}\sum_{p\in \{0,1,2,3\}^2\backslash\{(0,0)\}}
\ket{p}\ket{p} .
\label{eq:xi-def}\ee
We claim that $\E_W \ad_W^{\ot 2} = \proj{00} + \proj{\xi}$.  Since
$\braket{p,p}{\xi}\braket{\xi}{q,q'}$ is equal to $1/15$ when $p\neq
00$ and $q=q'\neq 00$, and zero otherwise, we will have
$$\E_W \bra{p,p}\ad_W^{\ot 2}\ket{q,q'} =
\left\{
\begin{array}{cc}
1 & \text{ if } p=q=q'=00\\
\frac{1}{15} & \text{ if } p\neq 00 \text{ and } q=q'\neq 00\\
0 & \text{ otherwise}
\end{array}
\right. ,$$
as claimed in the lemma.

To prove that $\E_W \ad_W^{\ot 2} = \proj{00} + \proj{\xi}$, we will
use representation theory. Schur's 
Lemma implies that $\E_W \ad_W^{\ot 2}$ is a
projector onto the invariant subspace of $\ad_W^{\ot 2}$.  For any 
integers $\lambda_1\geq \ldots \geq \lambda_d$, we have an
irrep of $\cU_d$, which we call $\cQ_\lambda^d$.
The conjugate irrep of $\cQ_\lambda^d$,
obtained by taking the complex conjugate of each representation
matrix, is given by $(\cQ_{\lambda}^d)^* \cong \cQ_{\lambda'}^d$, where
$\lambda' = (-\lambda_d,  - \lambda_{d-1}, \ldots,-\lambda_1)$.

The simplest non-trivial $\cU_d$ representation is  the defining
representation $\cQ^d_{(1)}\cong \bbC^d$, where $U\in\cU_d$ is mapped to
itself.    Here we
have dropped trailing zeros, so $(1)$ is equivalent to $(1,0,0,0)$.
Now observe
that $W\ra \ad_W$ is a representation of $\cU_4$ that is equivalent to
$\cQ^4_{(1)} \ot (\cQ^4_{(1)})^* \cong \cQ^4_{(1)} \ot \cQ^4_{(0,0,0,-1)}$.
Now we apply the
Clebsch-Gordan decomposition of the adjoint representation $\cQ^4_{(1)}
\ot \cQ^4_{(0,0,0,-1)}$ to obtain
$$ \cQ^4_{(1)}\ot \cQ^4_{(0,0,0,-1)}  \cong
\cQ^4_{(0)} \oplus \cQ^4_{(1,0,0,-1)},$$
meaning the direct sum of a trivial irrep $\cQ^4_{(0)}$ and a
15-dimensional irrep $\cQ^4_{(1,0,0,-1)}$. To
find the invariant subspace of $\ad_W^{\ot 2}$, we will need to
find the invariant subspace of
$$(\cQ^4_{(0)} \oplus \cQ^4_{(1,0,0,-1)})^{\ot 2} =
(\cQ^4_{(0)})^{\ot 2} \oplus
(\cQ^4_{(0)} \ot \cQ^4_{(1,0,0,-1)}) \oplus
(\cQ^4_{(1,0,0,-1)} \ot \cQ^4_{(0)}) \oplus
(\cQ^4_{(1,0,0,-1)})^{\ot 2}.$$
 Since
$\cQ_{(0)}^4$ is the trivial representation, $\cQ_{(0)}^4 \ot
\cQ_{(0)}^4$ is trivial as well, and the projector onto it is given by
$\proj{00}$.  Next $\cQ_{(0)}^4 \ot \cQ_{(1,0,0,-1)}^4   
\cong \cQ_{(1,0,0,-1)}^4$, and thus has no invariant subspace.
Similarly for $\cQ^4_{(1,0,0,-1)} \ot \cQ^4_{(0)}$ has no invariant
subspace.
$\cQ_{(1,0,0,-1)}^4$ is self-dual, and thus by Schur's Lemma,
$\cQ_{(1,0,0,-1)}^4 \ot \cQ_{(1,0,0,-1)}^4$ has a one-dimensional invariant
subspace.  To determine this subspace we observe that in the basis
$\{\ket{p}\}_{p\in\{0,1,2,3\}^2\backslash\{(0,0)\}}$, the
representation matrices of $\cQ_{(1,0,0,-1)}^4$ are real.  Also $(A
\ot I)\ket{\xi} = (I \ot A^T)\ket{\xi}$ for any matrix $A$ and for
$\ket{\xi}$ defined in \eq{xi-def}.  Together this means that
$\ket{\xi}$ is an invariant vector in
$\cQ_{(1,0,0,-1)}^4 \ot \cQ_{(1,0,0,-1)}^4$, and in fact spans its
invariant subspace.  We conclude that $\E_W \ad_W^{\ot 2} = \proj{00}
+ \proj{\xi}$. 

We now turn to the analysis of the classical
Markov chain.  Similar arguments were used in \cite{DOP07} and a
tighter analysis is forthcoming in \cite{HL08}.   First, we claim that
the Markov chain is ergodic and 
irreducible outside the vertex $0^n$.  Irreducibility follows from the
fact that every nonzero vertex is connected to $1^n$, while the
presence of self-loops implies ergodicity.  We can verify the
stationary distribution from the detailed balance condition using a
short calculation.

For the gap we will use
the comparison method of Diaconis and 
Saloff-Coste~\cite{DS96}.  Consider a Markov chain that picks two
random sites and replaces them each with random numbers from
$\{0,1,2,3\}$ subject only to avoiding the state $0^n$.  It follows
from \cite[Thm 3.2]{DS96}
that this chain has gap $\geq 2/n$, and applying the comparison method
(\cite[Thm 3.3]{DS96}) to the Markov chain described in
\lemref{markov} yields a lower bound of $1/n^2$ for its gap.  

To prove the final claim of \lemref{markov}, we want to choose $C$
sufficiently large so that 
\be |\E\gamma_t^2(p) - 4^{-n}/(1+2^{-n})| \leq 2^{-4n}
\label{eq:markov-convergence}\ee
whenever $t\geq Cn^3$.  Note that any computational basis state has
overlap $\exp(-O(n))$ with the stationary distribution.  Since the gap
is $\Omega(1/n^2)$, we can then use standard bounds on the mixing time of
Markov chains (e.g.  \cite[Lemma 2.8]{DS96}) to show that
\eq{markov-convergence} holds when $t\geq Cn^3$ for some constant $C$.

\section{Proof of \lemref{complex-approx}}
\label{app:complex-approx}

  We want to bound
$$\max_{\vec{\theta}} \l|\sum_{k=1}^d \theta_k x_k \r|
= \max_\phi \max_{\vec{\theta}} \text{Re}\l(e^{i\phi}\sum_{k=1}^d \theta_k x_k\r).$$
Here $\phi$ is maximized over $[0, 2\pi]$, and $\text{Re}(z)$ refers to
the real part of a complex number $z$.  We can now move the
maximization over $\vec{\theta}$ inside the sum to obtain \ba
\max_{\phi}\max_{\vec{\theta}} \text{Re}\l(e^{i\phi}\sum_{k=1}^d
\theta_k x_k\r) &= 
\max_\phi\sum_{k=1}^d \max_{\theta_k\in\{\pm 1\}} \text{Re}\l(e^{i\phi}
\theta_k x_k\r)\\
&= \max_\phi\sum_{k=1}^d \l|\text{Re}(e^{i\phi} x_k)\r|
\geq \E_\phi \sum_{k=1}^d \l|\text{Re}(e^{i\phi} x_k)\r|\\
&= \E_\phi \sum_{k=1}^d |x_k|\cdot |\cos \phi| = \frac{2}{\pi}
\sum_{k=1}^d |x_k|. 
\ea $\E_\phi$ indicates the expectation over $\phi$ chosen uniformly at
random from $[0,2\pi]$, and the second to last equality uses the
rotational invariance of the distribution of $\phi$.
\qed

\end{document}